\newtheorem{theorem}{Theorem}
\newtheorem{lemma}{Lemma}
\newtheorem{definition}{Definition}
\newtheorem{claim}{Claim}
\newtheorem{corollary}{Corollary}
\newtheorem{question}{Question}
\newtheorem{observation}{Observation}
\newtheorem*{theorem*}{Theorem}
\title{A Distributed Conductance Tester Without \\ Global Information Collection}
\author{Tu\u{g}kan Batu \\London School of Economics
  and Political Science, UK \\ \texttt{t.batu@lse.ac.uk} \and Amitabh Trehan \\Durham University, UK \\ \texttt{amitabh.trehan@durham.ac.uk}
  \and Chhaya Trehan \\London School of Economics and Political Science and University of Bristol, UK\\ \texttt{chhaya.dhingra@gmail.com} }
\DeclareMathOperator{\volume}{vol}         
\DeclareMathOperator{\trap}{trap}         
\DeclareMathOperator{\degree}{deg}        
\newcommand{\R}{\mathbb{R}}               
\newcommand{\comp}[1]{\overline{#1}}      
\newcommand{\trans}[1]{{#1}^{\mathsf{T}}} 
\newcommand{\ind}[1]{{\mathbbm{1}_{#1}}}  
\newcommand{\trind}[1]{\mathbbm{1}_{#1}^{\mathsf{T}}}
\newcommand{\eqdef}{\coloneqq} 
\algnewcommand{\LeftComment}[1]{\Statex \(\triangleright\) #1}
\DeclarePairedDelimiter{\norm}{\lVert}{\rVert}
\newcommand{\thresholdCount}{\tau}
\begin{document}
%
%
%


%
\maketitle              
\begin{abstract}
We propose a simple and time-optimal algorithm for property testing a graph for its conductance in the CONGEST model. Our algorithm takes only $O(\log n)$ rounds of communication (which is known to be optimal), and consists of simply running  multiple random walks  of  $O(\log n)$ length from a certain number of random sources, at the end of which nodes can decide if the underlying network is a  good conductor or far from it. Unlike previous algorithms, no aggregation is required even with a smaller number of walks. Our main technical contribution involves a tight analysis of this process for which we use spectral graph theory. 
We introduce and leverage the concept of sticky vertices which are vertices in a graph with low conductance such that short random walks 
originating from these vertices end in a region around them.

The present state-of-the-art distributed CONGEST algorithm for the problem by Fichtenberger and Vasudev [MFCS 2018], 
  runs in $O(\log n)$ rounds  using three distinct phases : building a rooted spanning tree (\emph{preprocessing}), running $O(n^{100})$ random walks to generate statistics (\emph{Phase~1}), and then convergecasting to the root to make the decision (\emph{Phase~2}).  The whole of our algorithm is, however, similar to their Phase~1 running only  $O(m^2) = O(n^4)$ walks.
  Note that aggregation (using spanning trees) is a popular technique but spanning tree(s) are sensitive to node/edge/root failures, hence, we hope our work points to other more distributed, efficient and robust solutions for suitable problems.
  
\end{abstract}
\section{Introduction}
\label{sec:Intro}
Checking whether a distributed network satisfies a certain property is an important problem. For example, this knowledge may be used to choose appropriate algorithms to be run on the network for certain tasks.
For instance, the randomised leader election algorithm of~\cite{leaderElection} works in sublinear time if the underlying graph is a good expander but not otherwise.
However, it may be hard to efficiently \emph{verify} certain
global graph properties in the CONGEST model of distributed computing.  In
this model, each vertex of the input graph acts as a processing unit and works in conjunction with other vertices to solve a computational problem. The
computation proceeds in synchronous rounds, in each of which every vertex can
send an~$O(\log n)$-bits message to each of its neighbours, do some local
computations and receive messages from its
neighbours. 

\emph{Distributed decision} problems are tasks in which the vertices of the underlying network have to collectively decide whether the network satisfies a global property $\mathcal P$ or not.  If the network indeed satisfies the property, then all vertices must accept and, if not, then at least one vertex in the network must reject. For many global properties, lower bounds on the number of rounds of computation of the form $\tilde \Omega(\sqrt n + D)$ are
known for distributed decision, where $n$ is the number of
vertices and $D$ is the diameter of the network.
(See~\cite{hardnessOfVerification}). It makes sense to relax
the decision question and settle for an approximate answer in these scenarios as is done in the field of property testing (see~\cite{propertyTesting1,propertyTesting2}) in the sequential setting.
A property testing algorithm in the sequential setting arrives at an
approximate decision about a certain property of the input by \emph{querying}
only a small portion of it. Specifically, an $\epsilon$-tester for a graph
property $\mathcal P$ is a randomised algorithm that can distinguish between
graphs that satisfy $\mathcal P$ and the graphs that are $\epsilon$-far from
satisfying $\mathcal P$ with high constant probability. An $m$-edge graph~$G$
is considered $\epsilon$-far from satisfying $\mathcal P$ if one has to modify
(add or delete) more than $\epsilon \cdot m$ edges of~$G$ for it to satisfy
$\mathcal P$. Two-sided error testers may err on all graphs, while one-sided
error testers have to present a witness when rejecting a graph. The cost of
the algorithm is measured in the number of queries made.
(See~\cite{propertyTesting1,propertyTesting2,propertyTesting3,graphPropertyTesting} for a detailed exposition of the subject.)

\noindent\textbf{Distributed Property Testing:}
A \emph{distributed property testing} problem is a relaxed variant of the
corresponding decision problem: if the input network
satisfies a property, then, with sufficiently high probability, all the vertices
accept  but if the input network is $\epsilon$-far from satisfying the
property, then at least one vertex rejects. The definition of "farness" in it 
remains the same as in the classical setting.  The
complexity measures are the number of communication rounds and
the number of messages exchanged during the execution of the tester.
Distributed property testing has been an active area of research recently. The
work of~\cite{brakerskiPatt-Shamir} was the first to present a distributed
algorithm (for finding near-cliques) with a property testing flavour. Later, in~\cite{Censor-Hillel2019}, the authors  did a
more detailed study of distributed property testing.  There has been further
study on the topic (see~\cite{even_et_al2017} and~\cite{FischerGO17}) in the
specific context of subgraph freeness. 

\noindent\textbf{Conductance Testing:}
\label{sec:exptesting}
We address the problem of testing the conductance of an unweighted, undirected
graph $G = (V,E)$ in the CONGEST model. Throughout, we
denote~$|V|$ by~$n$ and $|E|$ by $m$.  A distributed conductance tester can be
a useful pre-processing step for some distributed algorithms (such
as~\cite{leaderElection}) which perform better on graphs with high conductance. 
The test can help determine whether to proceed with the algorithm or not.

Given a graph $G = (V,E)$, and a set $A \subseteq V$, the \emph{volume} of $A$
(denoted~$\volume(A)$) is the sum of degrees of vertices in $A$.
We say that a graph $G =(V,E)$ is an \emph{$\alpha$--conductor} if every
$U\subseteq V$ such that~$\volume(U) \le \volume(V)/2$ has conductance at
least~$\alpha$.  Here, the conductance of a set $A$ is defined as
$E(A, V\setminus A)/\volume(A)$, where $E(A, V\setminus A)$ is the
number of edges crossing between~$A$ and~$V\setminus A$.  A closely related
property of graphs is their expansion.  We call a graph $G =(V,E)$ an
\emph{$\alpha$--vertex expander} if every $U\subseteq V$ such
that~$|U| \le |V|/2$ has at least~$\alpha |U|$ neighbours. Here, a vertex
$v \in V\setminus U$ is a neighbour of $U$ if it has at least one edge incident
to some~$u \in U$. Similarly, $G$ is called an \emph{$\alpha$--edge expander},
if every $U\subseteq V$ such that~$|U| \le |V|/2$ has at least~$\alpha |U|$
edges crossing between $U$ and $V \setminus U$.
For a constant $d$, a graph $G = (V,E)$
is called a \emph{bounded-degree graph} with degree bound~$d$ if every
$v \in V$ has degree at most $d$. In this case, both the vertex and edge expansions are bounded by
a constant (depending on $d$) times  the conductance. Testing expansion (essentially
testing conductance) in the bounded degree model has been studied for a long time in the 
classic centralised property testing model.
In this setting, the problem of testing expansion was
first studied by Goldreich and Ron~\cite{GoldreichRonTester} and later by~\cite{CzumajS10}.
Specifically, Czumaj and Sohler showed that given parameters $\alpha, \epsilon > 0$, the tester 
proposed by~\cite{GoldreichRonTester}
accepts all graphs with vertex
expansion larger than $\alpha$, and rejects all graphs that are $\epsilon$-far from having vertex
expansion at least $\alpha'  = \Theta( \alpha^2/\log n )$.
Their work was followed by the state of the art results by~\cite{KaleS11}
and~\cite{AsafAsaf10}.  Both these papers present
$\tilde O(n^{1/2 + \mu} / \alpha^{2})$-query testers (for a small constant $\mu > 0$) for distinguishing between
graphs that have expansion at least $\alpha$ and graphs that are $\epsilon$-far
from having expansion at least $\Omega(\alpha^2)$. 
In the general graph model (with no bound on degrees),
Li, Pan and Peng~\cite{Li2011TestingCI} presented a 
conductance tester. Their tester essentially pre-processes
the graph and turns it into a bounded degree graph 
while preserving (roughly) its expansion and size and then 
uses a tester for bounded degree graphs.

In the last few years, the
same problem has also been addressed in non-sequential models of computing such
as MPC~\cite{MPCtester} and distributed CONGEST~\cite{VasudevDistributed}. 
There are earlier papers studying distributed random walks whose results can be adapted towards conductance testing e.g.~\cite{MP-SparseCut,MP-Mixing-ICDCN17} . 
However, these results yield large gaps in the conductance of the graphs that are accepted ($\Omega(\alpha)$) and 
that of the graphs that are rejected ($O(\alpha^2/\mathrm{polylog}(n)) $).
The first distributed algorithm for the specific task of testing the conductance of an input graph that we are aware of 
 is by Fichtenberger and
Vasudev~\cite{VasudevDistributed}. This can test the conductance of the input network in
the unbounded-degree graph model (like ours). 

A typical algorithm for the problem in the sequential, as well as
non-sequential, models can be thought of as running in two \emph{phases} (after possibly a \emph{pre-processing phase}).  In
the first phase, the algorithm performs a certain number of short
($O(\log n)$-length) random walks from a randomly chosen starting vertex.  The
walks should mix well on a graph with high conductance and should take longer to
mix on a graph which is far from having high conductance (at least from some
fraction of starting vertices).  In the second phase, the algorithm then checks
whether those walks mixed well or not.  For that, the algorithm gleans some
information from every vertex in the graph and computes some aggregate
function.  Specifically in the classical and MPC settings, the algorithms count
the total number of pairwise collisions between the endpoints of the walks. 

The distributed algorithm for the problem
by Fichtenberger and Vasudev~\cite{VasudevDistributed} precedes the first phase by building a rooted BFS
spanning tree of the input graph.\footnote{If the construction of BFS tree
  takes longer than $O(\log n)$ rounds the algorithm rejects without proceeding
  to the first phase since all good conductors have small diameter. However, a
  bad conductor such as a dumbbell graph may also have small diameter, so their
  algorithm still needs to proceed with the test after the successful
  construction of the spanning tree.}  This spanning tree is used for
collecting information from the endpoints of the random walks in the second
phase.  Specifically, their algorithm estimates the discrepancy of the endpoint
probability distribution from the stationary distribution by collecting the estimate of
discrepancy on each endpoint at a central point.  If the overall discrepancy is above a
certain threshold, the algorithm rejects the graph. This process of building a
spanning tree and collecting information at the root to decide if the property
holds or not takes a global and centralised view of the testing process.

The following natural question arises in the context of the second phase:
\vspace*{-0.5em}
\begin{question}\label{qust: noGlobalInfo}
Is it possible to execute the second phase without computing a global aggregate function?
\end{question}
\vspace*{-0.5em}
In the classic setting, one strives for testers that make a sublinear (in $n$ or $m$)
number of queries which translates to running a sublinear number of walks.
With only a sublinear $(O (\sqrt n))$ number of walks, one hardly expects to
see any useful information by itself on any individual vertex or in a small
constant neighbourhood around it to know if the walks mixed well or
not. Therefore, one has to rely on an aggregate function such as the total
number of pairwise collisions between the endpoints of the walks.
In the non-sequential settings such as distributed CONGEST, one can utilise
the parallelism to run a superlinear number of short walks while keeping the run
time proportional to the length of the walks. This inspires us to stick to
Question~\ref{qust: noGlobalInfo} in distributed setting and investigate what
information one should store at each vertex during phase $1$ and how it should
be processed \emph{locally} to allow each node to decide locally
whether it is part of a good conductor or not. We investigate and answer the 
 following question in affirmative for general graphs (with no degree bound), 
 \vspace*{-0.5em}
 \begin{question}\label{qust: noGlobalInfoDist}
   Can we test the conductance in distributed CONGEST model without relying on any
   global information collection at a central point?
   \end{question}
   \vspace*{-0.5em} 
This leads us to our main result 
provided all the nodes know $n$ and $m$ beforehand.
Note that one can overcome the requirement of knowing $m$ by performing a rooted
spanning tree construction as in~\cite{VasudevDistributed} and using this tree 
to count the number of edges. Note that we will not use this tree for
collecting information about the random walks. \\
  \noindent \textbf{Our Results:} 
  Our main result is the algorithm presented in Section~\ref{sec:alg} (Pseudocodes~\ref{alg:disttester} and~\ref{alg:moveOneStep}). The main theorem is restated below:
 \begin{theorem*} [\textbf{Theorem~\ref{thm:mainthm}}]
  For an input graph $G = (V,E)$, and
  parameters~$0 < \alpha < 1$ and~$\epsilon > 0$, 
the distributed algorithm described in Section~\ref{sec:alg} 
\begin{itemize}
\item outputs Accept, with probability at least $2/3$, on
  every vertex of $G$ if $G$ is an $\alpha$-conductor.
\item outputs Reject, with probability at least $2/3$, on at least one
  vertex of $G$ if $G$ is $\epsilon$-far from any~$(\alpha^2/2880)$-conductor. 
\end{itemize}
The algorithm uses $O_{\epsilon, \alpha}(\log n)$ communication rounds.
\end{theorem*}
To be precise, the algorithm runs $2m^2$ walks of length $\frac{32}{\alpha^2} \log n$  from  each of $\theta(1/\epsilon)$ starting vertices with the number of communication rounds equal to  length of each random walk i.e. $\frac{32}{\alpha^2} \log n$ and messages at most $O(m \log n)$. A lower bound theorem in~\cite{VasudevDistributed} (Theorem 2) states that any distributed tester with this gap requires $\Omega(\log(n + m))$ rounds of communication even in LOCAL model of distributed computing. 
This implies that our running time in the more restricted CONGEST is optimal. 
 
\noindent\textbf{Testing in a single phase:}
\label{head:messageSaving}
The advantage of not having to collect global information is that it lets us do away with the
wasteful construction of a spanning tree and information accumulation at the
root.  Since we do not need to construct a spanning tree, we do not need a pre-processing phase unlike~\cite{VasudevDistributed}.
Note that setting up a spanning tree creates multiple points of failures for the aggregation phase. One could attempt to handle failure of the root of a single tree by setting up multiple spanning trees simultaneously. However, note that a single node failure (of a node internal to all these trees) could disconnect all these trees and if this happens early in the phase $2$, we may not get enough information for the root(s) to make their decisions. Since our phase $2$ is `instant' i.e. involves no communication, we do not have any failure issues. This opens up the possibility of a
fully-fault-tolerant tester for dynamic networks if a fault-tolerant phase $1$ (i.e. fault-tolerant random walks) could be designed.

 \subsection{Technical Overview}
 In this section, we give a general overview of the concepts
 used in our algorithm.  Like all the previous algorithms
 for conductance testing, we perform a certain number of random walks from a randomly selected
 starting vertex.  To boost the success probability of the process, we repeat
 this process in parallel from a constant number of randomly selected starting
 vertices.  The main technical challenge in running random walks in parallel
 from different starting vertices is the congestion on the edges.  As done
 by~\cite{VasudevDistributed}, we overcome this problem by not sending the
 entire trace of the walk from its current endpoint to the next.  For each
 starting point~$q$ and for all the walks going from $u$ to $v$, we simply send
 the ID of $q$ and the number of walks destined for $v$ to $v$.  At the end of
 this process, for each starting point $q$, we simply store at each vertex $v$,
 the number of walks that ended at $v$. Finally, each vertex $v \in V$
 looks at the information stored at $v$ to check if the
 number of walks received from any starting vertex is more than a certain threshold.
 If so, it outputs \emph{Reject} and, otherwise, it outputs \emph{Accept}.
 
  \noindent\textbf{Stickyness Helps:} To show that the number of walks received by a vertex $v$ is sufficient to decide
 whether~$v$ is part of a good conductor or not, we proceed as follows.  A technical lemma from~\cite{LiP15} implies that if a graph $G$ is
 $\epsilon$-far from being an $\alpha$-conductor, then there exists a set
 $S \in V$ of sufficiently low conductance (of cut $(S, V\setminus S)$, see
 Definition~\ref{def:cutConductance})) and sufficiently high volume. It follows intuitively that it is
 likely that a short random walk starting from a randomly selected starting
 vertex in $S$ should not go very far and end in $S$. In particular, we show
 that there exist a subset $P \subseteq S$ such that short walks starting
 from any $v \in P$ end in a large enough region $T$ (subset of $S$) around $v$.  We make
 this notion precise by using spectral graph theory to show that a large portion of the volume of
 low-conductance set~$S$ (as described above) belongs to  \emph{sticky}
 vertices.  We call a vertex $v \in S$ sticky if there exists a set
 $T \subseteq S$ such that $v \in T$ and short random walks starting from $v$
 end in~$T$ with a \emph{sufficiently} high probability.  We define
 $\trap(v, T, \ell)$ as the probability that an $\ell$ length walk starting
 from~$v \in T \subseteq S$ ends in $T$. 
 
 \noindent\textbf{Trap Probability:} Observe that our definition of trap probability is slightly different from the one
 generally used in the analysis of similar problems.
 The notion of trap probability is generally used to bound the probability of an $\ell$-step 
 random walk staying in a specific set for its entire duration (in each of the $\ell$ steps). See for
 example the definitions of \emph{remain} and \emph{escape} probabilities in~\cite{GharanT12}.
 Similarly, Czumaj and Sohler also implicitly use the concept of trap probability in their expansion tester~\cite{CzumajS10} 
 and they also bound the probability of a walk staying inside a set of low conductance
 for its entire duration. We relax the definition a bit and only care about the walk \emph{ending} in a subset of a low  conductance set. 
 This allows us to also use the walks that may have briefly escaped the low conductance 
 region when counting the number of trapped walks.
 Thus, if we run sufficiently many walks from one of the sticky vertices, then a lot
 of them will end in a subset $T$ of $S$ and some vertex in $T$ will see a lot
 more walks than any vertex in a good conductor should.  To ensure that we pick
 one of the sticky vertices as a starting vertex, we sample each vertex to be a
 starting vertex with appropriate probability.

\noindent\textbf{Spectral Approach:} In the analysis of the convergence behaviour of random walks (to the stationary distribution) using the eigenvectors and eigenvalues of the random walk matrix $M$ (first introduced by Kale And Seshadhri~\cite{KaleS11} and later refined by~\cite{VasudevDistributed} for unbounded degree graphs), they divide the set of eigenvectors of $M$ into \emph{heavy} and \emph{light} sets. 
All the eigenvectors with eigenvalues above a certain threshold (appropriately chosen) are considered heavy and the rest and considered light. 
This lets one drop all the light eigenvectors from the analysis since their contribution to the convergence behaviour is minimal. 
In our analysis, we use a similar technique where we focus on the heavy eigenvectors of the random walk matrix $M$
to lower bound the trap probability of a random walk from an appropriately chosen starting vertex. 
To further tighten our analysis, we further divide the set~$H$ of heavy eigenvectors into the \emph{heaviest} eigenvector $\vec e_1$ (with the maximum eigenvalue $1$) and the set $H \setminus \{\vec e_1\}$. 
We use both (but separately and not as one bundle $H$) in our analysis.
This also makes intuitive sense since the heaviest eigenvector
makes the maximum contribution to the trap probability and treating it separately tightens our bound.
We note that~\cite{KaleS11} and~\cite{VasudevDistributed}
analyse a different measure - the discrepancy between their final endpoint probability distribution and the stationary distribution;
and the contribution of $\vec e_1$ to this measure is zero, so their analysis does not benefit from segregating the heaviest eigenvector from the set $H$ of the heavy eigenvectors. \\
\noindent\textbf{Organisation:}
The rest of the paper is organised as follows.  In Section~\ref{sec:Prelim}, we
provide necessary definitions and state some basic lemmas that are used in rest
of the paper. In Section~\ref{sec:alg}, we provide a detailed description of our testing algorithm.  Section~\ref{sec:combproof} is
dedicated to the proof of our main theorem.

\section{Preliminaries}\label{sec:Prelim}
Let $G = (V,E)$ be an unweighted, undirected graph on $n$ vertices and $m$ edges.  
We assume that the vertices of $G$ have unique identifiers.
For a given vertex $v \in V$, $\degree(v)$ denotes the \emph{degree} of $v$.  
For sets~$A,B\subseteq V$, we denote by~$E(A,B)$ the number of edges that have one
endpoint in~$A$ and the other 
in~$B$.
A cut 
is a partition of the vertices
into two disjoint subsets.
Given a graph~$G = (V,E)$ any subset $S \subseteq V$ defines a cut denoted by $(S, \comp{S})$,
where $\comp{S} = V\setminus S$.
\begin{definition}\label{def:cutConductance}
Given a cut $(S, \comp S)$ in $G$, 
the conductance of $(S, \comp S)$ is defined as 
\begin{equation*}
     \frac{E(S,\comp S)}{\min \{\volume(S), \volume(\comp S) \}},
\end{equation*}
where $\volume(A) = \sum_{v \in A} \degree(v)$. Alternatively, we also refer to
the conductance of a cut~$(S,\comp{S})$ as the \emph{conductance of
  set~$S$}. The conductance of a graph is the minimum conductance of any cut in
the graph.
\end{definition}
Throughout the paper, all vectors $\vec x \in \R^n$ are column vectors.  For a
vector~$\vec x \in \R^n$, we denote by~$\trans{\vec x}$ the transpose of
$\vec x$.  For two vectors $\vec x$ and $\vec y$ in $\R^n$,
$\langle \vec x, \vec y \rangle$ denotes their inner product.
We denote the $n \times n$ adjacency matrix of the input graph $G$ by $A$,
where $A_{ij} = 1$, if~$(i, j) \in E$ and~$0$ otherwise. Let $D$ denote the
$n \times n$ diagonal degree matrix of $G$, where~$D_{ij} = \degree(i)$ if
$i =j$ and~$0$ otherwise.
The main technical tool in our analysis will be random walks on the input graph
$G$.  We denote a random walk by its transition matrix $M$. For a pair
of vertices~$u, v \in V$,  let~$M_{uv}$ be the probability of
visiting $u$ from $v$ in one step of $M$. In the standard
definition of a random walk, $M_{uv}$ is defined as $1/\degree(v)$ if there is an edge from~$u$ to~$v$, and $0$ otherwise.

We use a slightly modified version of the standard random walk called a
\emph{lazy random walk}.  
A \emph{lazy random walk} currently
stationed at $v \in V$, stays at $v$ with probability~$1/2$ and, with the
remaining probability $1/2$, it visits a neighbour of $v$ uniformly at random.
 Let $M$ be a lazy random walk on $G$,
the transition probabilities for $M$ are defined as follows: for a
pair~$v,w \in V$ such that $v \neq w$, 
$M_{wv} = \frac{1}{2\degree(v)}$, if
$(v,w) \in E$ and~$0$, otherwise. Further, for $v \in V$, we 
define~$M_{vv} = 1/2$.
Algebraically, $M$ can be expressed as
$M = \frac{1}{2} (I + AD^{-1}),$ where $I$ is the $n \times n$ identity matrix.
Let $\pi$ be the stationary distribution of $M$.
In the stationary distribution of a lazy random walk,
each vertex is visited with probability proportional to its degree.
More formally,
$
\pi = \frac{\vec d}{2m},
$
where $\vec d$ is an $n$-dimensional vector of vertex degrees and $m$ is the number of edges in $G$.

In the following, we provide a brief exposition of relevant concepts from spectral graph theory.
We refer the reader to the textbook~\cite{chung97} by Fan Chung for a detailed treatment of the subject.
Note that for irregular graphs, $M$ is an asymmetric matrix and may not have an orthogonal set of eigenvectors.
For analyzing random walks on $G$ in terms of the eigenvalues and eigenvectors of its associated matrices, 
we rely on a related symmetric matrix called the \emph{normalized Laplacian} of $G$ denoted by $N$.
The normalized Laplacian $N$ is defined as
\[
N = I - D^{-1/2} A D^{-1/2}.
\]
We show below a way to express $M$ in terms of $N$.
\begin{align*}
\begin{aligned}
M &= 1/2 (I + AD^{-1}) = I - 1/2 (I - AD^{-1})\\
   &=  I - 1/2 D^{1/2} (I - D^{-1/2} A D^{-1/2}) D^{-1/2}\\
   &= I - \frac{1}{2} D^{1/2} N  D^{-1/2}\\
   & = D^{1/2} (I - N/2)D^{-1/2}.
\end{aligned}
\end{align*}
Since $N$ is a symmetric matrix, it has a set of $n$ real eigenvalues and a
corresponding set of mutually orthogonal eigenvectors.  Throughout we let
$\omega_1 \le \omega_2 \le \ldots\le \omega_n$ denote the set of eigenvalues and
$\vec \zeta_1, \vec \zeta_2, \ldots, \vec \zeta_n$ denote the corresponding set
of eigenvectors.  It is well known that eigenvalues of $N$ are
$0 = \omega_1 \le \omega_2\le \ldots \le \omega_n \le 2$.  It is easy to verify that
$\sqrt {\vec d}=(\sqrt{d_1},\sqrt{d_2},\dotsc,\sqrt{d_n})$ is the first eigenvector $\vec \zeta_1$ of $N$ with
corresponding eigenvalue $\omega_1 = 0$.  Each of the orthogonal
eigenvectors~$\vec \zeta_i$ can be normalized to be a unit vector as
$\vec e_i = \vec \zeta_i/\norm { \vec \zeta_i}_2$
Together, these
orthogonal unit eigenvectors define an orthonormal basis for $\R^n$.  Observe
that the first unit eigenvector~$ \vec e_1$ of this orthonormal basis is
$\sqrt {\vec d}/ \sqrt{2m}$.  Also observe that the stationary distribution
$\pi$ of $M$ is equal to $D^{1/2} \vec e_1/\sqrt{2m}$.

In the following we show that for every eigenvector $\vec e_i$ with eigenvalue $\omega_i$, $D^{1/2} \vec e_i$ is a right-eigenvector of $M$
with eigenvalue $1 - \omega_i/2$.
\begin{align*}
\begin{aligned}
M (D^{1/2} \vec e_i) &= D^{1/2} (I - N/2) D^{-1/2} (D^{1/2} \vec e_i) \\
                             & = D^{1/2} (I - N/2) \vec e_i \\
                             & = D^{1/2} (\vec e_i - (N \vec e_i)/2 ) = (1 - \omega_i/2) D^{1/2} \vec e_i.
\end{aligned}
\end{align*}
It follows that the eigenvalues of $M$ lie between $0$ and $1$.\\
Observe that the stationary distribution $\pi$ of $M$ is a multiple of $D^{1/2} \sqrt {\vec d}$, the first right-eigenvector of $M$ corresponding to $\vec e_1$.
%

On a connected, graph, a lazy random walk $M$
can be viewed as a reversible, aperiodic 
Markov chain with
state space $V$ and transition matrix $M$.
\begin{definition}\label{def:cheegerconstant}
  Let $\mathcal M$ be a reversible, aperiodic Markov chain on a finite state
  space $V$ with stationary distribution $\pi$. Furthermore, let
  $\pi(S) = \sum_{v \in S} \pi(v)$.  The Cheeger constant or
  conductance~$\phi(\mathcal M)$ of the chain is defined as 
\begin{equation*}
\phi(\mathcal M) = \min_{S \subset V: \pi(S) \le 1/2} \frac{\sum_{x \in S, y
    \in V\setminus S} \pi(x) \mathcal M(y,x)}{\pi(S)}. 
\end{equation*}
\end{definition}
Here $\mathcal M(y,x)$ is the probability of moving to state $y$ from state $x$ in one step.

The definition of the lazy random walk matrix $M$ and the
the fact that the stationary distribution $\pi$ of our lazy random walk is $\vec d/2m$ 
together imply that the
Cheeger constant~$\phi(M)$ (henceforth,~$\phi_*$) of the walk
$M$ is
\begin{equation*}
\phi_* = \min_{U \in V, \volume(U) \le \volume(V)/2} \frac{E(U, V\setminus U) }{2 \cdot \volume(U)}.
\end{equation*}
For an $\alpha$-conductor, we get that
\begin{equation}\label{eq:cheegereofconductors}
\phi_*  = \frac{\alpha}{2}. 
\end{equation}

\section{A Distributed Algorithm for Testing Conductance}
\label{sec:alg}
Given an input graph~$G = (V,E)$, a
conductance parameter~$\alpha$, and a distance parameter $\epsilon$, our
distributed conductance tester distinguishes, with probability at least   $2/3$, between
the case where~$G$ is an $\alpha$-conductor and the case where~$G$ is
$\epsilon$-far from being an $\Omega(\alpha^2)$-conductor.
A key technical lemma from~\cite{LiP15} implies that, if $G$ is~$\epsilon$-far
from being an $\Omega(\alpha^2)$-conductor, then there exists a low-conductance
cut $(S, \comp S)$ such that~$\volume(S) \ge \epsilon m /10 $.  We build on
this lemma to show, using spectral methods (see Lemma~\ref{lem:trapprob2} and
Corollary~\ref{cor:highcountTrapT}), that there exist a set of \emph{sticky}
vertices with high enough volume in~$S$. Recall that a vertex~$x$ in a
low-conductance set~$S$ is sticky if there exists a large enough subset
$T \subset S$ such that~$x \in T$ and a short random walk starting from $x$
\emph{ends} in $T$ with a sufficiently high probability.  Intuitively, random
walks starting from sticky vertices tend to stick to a small region around
them.  This leads to some vertex in the graph receive more than their
\emph{fair share} of number of walks.  On the other, hand if a graph is a good
conductor, then the random walks from anywhere in the graph mix very
quickly. This ensures that the fraction of the total number of walks received
by any vertex in the graph is proportional to its degree.

We randomly sample a set $Q$ of $\Omega(1/\epsilon)$
\emph{source} vertices (each vertex $v \in V$ is sampled with probability proportional to its degree). 
Then we run a certain number of short random walks from each source in $Q$.  
Since a large part of the volume of our high-volume, low-conductance
set $S$ consists of only sticky vertices
 in a bad conductor, some vertex in $Q$ will be sticky with
sufficiently large probability. We use
the number of walks received by each vertex from a specific source as a test
criteria.
We implement sampling of the set $Q$ by having each vertex $v$ sample itself by
flipping a biased coin with probability $5000 \cdot \degree(v)/(2\epsilon m)$. 
Therefore, we get that $|Q| = 5000/\epsilon$ in expectation.
It follows from Chernoff bound that the probability of~$Q$ having more than~$5500/\epsilon$
vertices is at most $e^{-23/\epsilon} \le e^{-23}$.  Then, we perform $K$
random walks of length $\ell$ starting from each of the chosen vertices
in~$Q$. The exact values of these parameters are specified later in the sequel.
The pseudocode of the algorithm is presented in Algorithm~\ref{alg:disttester}.  At any
point before the last step of random walks, each vertex $v \in V$ contains a
set $W$ of tuples~$(q, count, i)$, where $count$ is the number of walks of
length~$i$ originating from source $q$ currently stationed at $v$. All these
walks are advanced by one step (for~$\ell$ times) by invoking
Algorithm~\ref{alg:moveOneStep}.  At the end of the last step of the walks,
Algorithm~\ref{alg:moveOneStep} outputs a set of tuples~$C_v$. Each tuple in
$C_v$ is of the form $(q,count)$, where $count$ is the total number
of~$\ell$-step walks starting at~$q$ that ended at~$v$.  Then, in
Algorithm~\ref{alg:disttester}, processor at vertex~$v$ goes over every tuple
$(q,count)$ in $C_v$ (see Lines~\ref{line:checkCount1} to~\ref{line:checkCount2} of Algorithm~\ref{alg:disttester}),
and if the $count$ value of any of them is above a
pre-defined threshold~$\thresholdCount$, it outputs \emph{Reject}.  If none of
the tuples have their $count$ value above threshold, it outputs \emph{Accept}.
The exact value of $\thresholdCount$ is specified later in the sequel.

When advancing the walks originating at a source $q \in Q$ by one step, 
we do not send the full trace of every random walk.
Instead, for every source $q \in Q$, every vertex~$v \in V$ only sends a
tuple $(q, k, i)$ to its neighbour $w$ indicating that $k$ random walks
originating at~$q$ have chosen~$w$ as their destination in their~$i$th step.
Since the size of $Q$ is constant with high enough probability, we will not
have to send more than a constant number of such tuples on each edge.  Moreover,
each tuple can be encoded using $O(\log n)$ bits (given the values of parameters $\ell$ and $K$ specified in the sequel). 
Hence, we only communicate
$O(\log n)$ bits per edge in any round with high probability. 
To ensure no congestion, 
we check the length of every message (Algorithm~\ref{alg:moveOneStep}: 
lines~\ref{lin:checkMessageLengthBegin} -~\ref{lin:checkMessageLengthEnd}
).  
If a message appears too
large to send, we simply output \emph{Reject} on the host vertex and abort the
algorithm.  Note that the number of tuples we ever have to send along any edge
is upper bounded by $|Q|$ and $|Q| \le 5500\epsilon$, with probability at least
$1 - e^{-23}$.  Therefore, we may rarely abort the algorithm before completing
the execution of the random walks.  If that happens, then the probability of
accepting an $\alpha$-conductor is slightly reduced.  Hence the
following observation follows:
\begin{observation}\label{obs:rejectForLargeMessage}
  Algorithm~\ref{alg:disttester} rejects an $\alpha$-conductor due to
  congestion with probability at most~$e^{-23}$.
\end{observation}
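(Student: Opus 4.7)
The plan is to trace the only possible source of a congestion-induced rejection in Algorithm~\ref{alg:disttester} back to the size of the sampled source set~$Q$, and then invoke the Chernoff tail bound on~$|Q|$ that was already computed immediately after its definition.

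First, I would argue structurally that the only place at which congestion can cause a rejection is the length check at lines~\ref{lin:checkMessageLengthBegin}--\ref{lin:checkMessageLengthEnd} of Algorithm~\ref{alg:moveOneStep}. Whenever a vertex~$v$ advances the walks by one step, every tuple that~$v$ sends to a fixed neighbour~$w$ has the form $(q, k, i)$ for some $q \in Q$, so the number of tuples transmitted on any single edge in any given round is at most~$|Q|$. Since each tuple can be encoded in $O(\log n)$ bits (given the parameters $\ell$ and $K$ set in the sequel), the aggregated message on any edge has length $O(|Q| \log n)$, and so it fits within the CONGEST bandwidth of $O(\log n)$ bits per edge per round whenever $|Q|$ is bounded by an absolute constant depending only on~$\epsilon$.

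Second, I would quantify $|Q|$ and combine. Each vertex~$v$ puts itself into~$Q$ independently with probability $5000 \cdot \degree(v)/(2m\epsilon)$, so $\mathbb{E}[|Q|] = \sum_{v \in V} 5000\,\degree(v)/(2m\epsilon) = 5000/\epsilon$. A standard multiplicative Chernoff bound applied to this sum of independent indicators yields $\Pr[|Q| > 5500/\epsilon] \le e^{-23/\epsilon} \le e^{-23}$, exactly as noted in the text. Consequently, with probability at least $1 - e^{-23}$, every message passes the length check in every round, so no processor ever triggers the congestion abort, giving the claimed bound $e^{-23}$ on the probability of rejecting an $\alpha$-conductor due to congestion.

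The only subtlety is confirming that the threshold chosen in the length check is slack enough to accommodate $|Q| \le 5500/\epsilon$ tuples of $O(\log n)$ bits each, which is a purely bookkeeping matter about message encoding rather than a probabilistic hurdle; the Chernoff calculation itself is routine.
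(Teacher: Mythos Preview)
Your proposal is correct and follows essentially the same argument as the paper: the paper observes (in the paragraph immediately preceding the statement) that the number of tuples sent along any edge is upper bounded by~$|Q|$, that the Chernoff bound gives $\Pr[|Q| > 5500/\epsilon] \le e^{-23}$, and hence the congestion check at lines~\ref{lin:checkMessageLengthBegin}--\ref{lin:checkMessageLengthEnd} can only fire with probability at most~$e^{-23}$. Your write-up is simply a more explicit unpacking of that same reasoning.
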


We set the required parameters of
Algorithm~\ref{alg:disttester} as follows:
\begin{itemize}
\item the number of walks $K = 2m^2$, 
\item the length of each walk $\ell = \frac{32}{\alpha^2} \log n$
\item the rejection threshold for vertex $v \in V$, $\thresholdCount_v = m\cdot\degree(v)\cdot(1+2 n^{-1/4})$.
\end{itemize}

\begin{algorithm}[th!]
{\small
\caption{Distributed algorithm running at vertex $v$ for testing
  conductance.}
\label{alg:disttester}
\begin{algorithmic}[1]
\Procedure{Distributed-Graph-Conductance-Test}{$G,\epsilon, \alpha, \ell,K$}
\\  
\Comment{The algorithm performs $K$ random walks of length $\ell$ from a set $Q$ of
  $\Theta(1/\epsilon)$ starting vertices, where every starting vertex is
  sampled randomly from $V$.} 
\State $\ell$ : The length of each random walk
\State $K$ : The number of walks
\State $W_v$ : Set of tuples $(q,count,i)$ \Comment{where $count$ is
  the number of walks originating at source $q$ currently stationed at
  $v$} 

\State $C_v$ : Set of tuples $(q,count)$ \Comment{where $count$ is
  the total number of $\ell$ step walks starting at~$q$ that ended at $v$} 
 \State $\thresholdCount_v$ : maximum number of $\ell$-length walks $v$ should see from a given source on an $\alpha$-conductor.
\State Flip a biased coin with probability $p = 5000 \degree(v)/(\epsilon 2m)$ to decide
whether to start $K$ lazy random walks. 
\State If chosen, initialise~$W_v$ as $W_v \gets \{ (v,K,0) \}$.
\State Call Algorithm~\ref{alg:moveOneStep} for $\ell$ synchronous rounds.
\While{there is some tuple~$(q, count)$ in~$C_v$}\label{line:checkCount1}
	\If{ $count > \thresholdCount_v$ } \Comment{Received too many walks from $q$.}
	    \State Output \emph{Reject} and stop all operations.
	 \Else
	      \State Remove $(q, count)$ from $C_v$
	 \EndIf
\EndWhile\label{line:checkCount2}
\State Output \emph{Accept}
\EndProcedure
\end{algorithmic}
}
\end{algorithm}
\begin{algorithm}[th!]
{\small
\caption {Algorithm for moving random walks stationed at $v$ by one step.}
\label{alg:moveOneStep}
\begin{algorithmic}[1]
\Procedure{Move-Walks-At-$v$}{}
\State $W_v$ : Set of tuples $(q,count,i)$ \Comment{where $count$ is the
  number of walks originating at source $q$ currently stationed at $v$ just after their $i$th step.}. 
\State $D_v$ : Set of tuples $(q,count,dest)$ \Comment{where $count$ is
  the number of walks starting at  $q$ that are to be forwarded to $dest$.} 
\State $C_v$ : Set of tuples $(q,count)$ \Comment{where $count$ is the
  total number of walks starting at $q$ that have $v$ as their final
  destination or endpoint.} 
\State $D_v \gets \emptyset$. 
%
\While{there is some tuple~$(q,k,i)$ in~$W_v$}
\If{$i \neq L$} \Comment{If not the last step, process the next set of destinations.}
\State Draw the next set of destinations for the $k$ walks and update the set $D_v$.
\State Remove $(q,k,i)$ from $W_v$
\LeftComment{If last step of the walks, update how many ended at $v$. }
\State Update $C_v$ to reflect the $k$ walks that ended in $v$.
\EndIf
\EndWhile

\LeftComment{Prepare the messages to be sent}
\While{there is some tuple~$(q,count,dest)$ in $D_v$}
\State Add tuple~$(q,count,i+1\rangle$ to the message to be sent to $dest$
\EndWhile
\LeftComment{Check each message for length}
 \State For each message $M$ to be sent\label{lin:checkMessageLengthBegin}
	 \If {the number of tuples in $M > 5500/\epsilon$}
	 	 \State Output \emph{Reject} and stop all operations.\label{lin:checkMessageLengthEnd}
	  \EndIf      

\State Send all the messages to their respective destinations.

 \LeftComment{Process the messages received}
 \State For each source $s$ from which a total of $s_{count}$ walks are received, \\
 add tuple $(s, s_{count},i +1)$ to $W_v$ 
\EndProcedure
\end{algorithmic} 
}
\end{algorithm}
\subsection{Analysis of the Algorithm}\label{sec:combproof}
The main idea behind our algorithm is that, in a bad conductor, a random walk
would converge to the stationary distribution more slowly and would initially
get trapped within sets of vertices with small conductance. 
We provide a lower bound on the probability of an $\ell$-step random walk
starting from a vertex chosen at random (with probability proportional to its degree) from a subset~$T$ of 
a low-conductance set~$S$ finishing at some vertex in~$T$.
\begin{definition}\label{def:trapprpb}
  For a set~$T \subseteq V$, and a vertex $u \in T$, let
  $\trap(u, T, \ell)$ (henceforth \emph{trap probability}) denote the
  probability of an $\ell$-step random walk starting from $u \in T$ finishing
  at some vertex in~$T$. When the starting vertex is chosen at random
  from $T$ with probability proportional to its degree, we denote by~$\trap(T,
  \ell)$ the average trap probability (weighted by vertex degrees) over
  set $T$:
  $\trap(T, \ell) = \frac1{\volume(T)} \sum_{u \in T} \degree(u)\cdot \trap(u, T, \ell)$.
\end{definition}
Given a set $S$ of conductance at most $\delta$ and $T \subseteq S$, 
we establish a relationship between the average trap probability
$\trap(T, \ell)$ and conductance~$\delta$ of~$S$ in the next two
lemmas. We first consider the case $T=S$ in Lemma~\ref{lem:trapprob1}, and then
obtain a bound when~$T$ is a subset (of sufficiently large volume) of~$S$ in Lemma~\ref{lem:trapprob2}.
\begin{lemma}\label{lem:trapprob1}
Consider a set $S \subseteq V$ such that the cut $(S, \comp{S})$ has
conductance at most~$\delta$. Then, for any integer $\ell >0$, the following holds
\[
	\trap(S, \ell) \ge \frac{\volume(S)}{2m} + \left(\frac{5}{6} - \frac{\volume(S)}{2m}   \right) (1-3\delta)^{\ell}.
\]
\end{lemma}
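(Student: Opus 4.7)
The strategy is to recast $\trap(S,\ell)$ as a quadratic form in the normalized Laplacian, decompose spectrally, and then bound the contribution of the ``non-stationary'' component using the conductance hypothesis together with Jensen's inequality.

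If we draw the starting vertex $u \in S$ with probability $\degree(u)/\volume(S)$, the initial distribution vector is $\vec{\pi_S} = D\ind{S}/\volume(S)$, so the probability of being in $S$ after $\ell$ steps is $\trind{S} M^{\ell} \vec{\pi_S}$. Substituting the similarity $M^{\ell} = D^{1/2}(I - N/2)^{\ell} D^{-1/2}$ from the preliminaries yields
\[
\trap(S,\ell) = \frac{1}{\volume(S)}\, \trans{\vec{d_S}}\, (I - N/2)^{\ell} \vec{d_S}, \qquad \vec{d_S} \eqdef D^{1/2}\ind{S}.
\]
Expanding $\vec{d_S} = \sum_i c_i \vec{e_i}$ in the orthonormal eigenbasis $\{\vec{e_i}\}$ of $N$ (with eigenvalues $\omega_i$), and using $\vec{e_1} = \sqrt{\vec{d}}/\sqrt{2m}$, I would compute $c_1 = \volume(S)/\sqrt{2m}$, so that $c_1^2 = \rho\cdot\volume(S)$ with $\rho \eqdef \volume(S)/(2m)$. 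Since $\sum_i c_i^2 = \norm{\vec{d_S}}^2 = \volume(S)$, peeling off the top eigenvalue $1 - \omega_1/2 = 1$ gives
\[
\trap(S,\ell) = \rho \;+\; \frac{1}{\volume(S)}\sum_{i \ge 2} c_i^2 \left(1 - \omega_i/2\right)^{\ell}.
\]

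The key estimate on the tail sum combines two ingredients. First, a direct calculation using $N = I - D^{-1/2}AD^{-1/2}$ gives $\trans{\vec{d_S}} N \vec{d_S} = \trind{S}(D-A)\ind{S} = E(S,\comp{S})$, which together with the cut conductance bound on $(S,\comp{S})$ yields $\sum_{i \ge 2} c_i^2 \omega_i \le \delta\volume(S)$. Second, the map $x \mapsto (1 - x/2)^{\ell}$ is convex on $[0,2]$, so Jensen's inequality applied with weights $c_i^2$ for $i \ge 2$ produces
\[
\sum_{i \ge 2} c_i^2 \left(1 - \omega_i/2\right)^{\ell} \;\ge\; (1-\rho)\volume(S)\cdot \left(1 - \bar\omega/2\right)^{\ell},
\]
where $\bar\omega \eqdef \bigl(\sum_{i \ge 2} c_i^2 \omega_i\bigr) / \bigl(\sum_{i \ge 2} c_i^2\bigr) \le \delta/(1-\rho)$.

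It remains to match the constants. Whenever $\rho \le 5/6$, we have $1/(1-\rho) \le 6$, so $1 - \bar\omega/2 \ge 1 - 3\delta$, and putting everything together gives $\trap(S,\ell) \ge \rho + (1-\rho)(1-3\delta)^{\ell} \ge \rho + (5/6 - \rho)(1-3\delta)^{\ell}$, using $1-\rho \ge 5/6 - \rho$ and $(1-3\delta)^{\ell} \ge 0$. The case $\rho > 5/6$ is handled separately: since every term in the eigenvalue expansion is non-negative we immediately have $\trap(S,\ell) \ge \rho$, which already dominates the claimed right-hand side because $(5/6-\rho) \le 0$. The main technical obstacle is the Jensen step: pushing the Rayleigh-quotient bound through the convex function without losing constants. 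The inflation factor $1/(1-\rho)$ in $\bar\omega$ is precisely what forces the threshold $5/6$, so that $1/(1-\rho) \le 6$ aligns with the factor $3$ inside $(1-3\delta)^\ell$.
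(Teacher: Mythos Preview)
Your argument is correct and in fact yields the slightly stronger intermediate bound $\trap(S,\ell)\ge \rho+(1-\rho)(1-3\delta)^{\ell}$ before you weaken $1-\rho$ to $5/6-\rho$. The route, however, is genuinely different from the paper's. The paper does \emph{not} use Jensen. Instead it fixes a threshold, calling an eigenvalue $\lambda_i=1-\omega_i/2$ \emph{heavy} when $\lambda_i\ge 1-3\delta$, and uses a Markov-type step on $\sum_i \alpha_i^2\omega_i\le \delta\,\volume(S)$ to conclude that the heavy index set $H$ carries at least $5/6$ of the mass: $\sum_{i\in H}\alpha_i^2\ge \tfrac{5}{6}\volume(S)$. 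The trap probability is then bounded by discarding the light terms and lower-bounding each surviving $\lambda_i^{\ell}$ by $(1-3\delta)^{\ell}$.

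What each approach buys: your Jensen argument is shorter, avoids an auxiliary threshold, and does not need a separate case analysis on $\rho$ until the very end. The paper's threshold argument is more elementary (no convexity) and, more importantly, produces a reusable structural statement about the set $H$ of heavy eigen-directions. That statement is exactly what the paper leans on in the next lemma (the analogue for subsets $T\subseteq S$), where the heavy-set mass bound for $S$ is transferred to $T$ via a triangle inequality on $\sqrt{\sum_{i\in H}(\cdot)^2}$; a pure Jensen proof of the present lemma would not hand you that intermediate object. Both proofs share the same tacit assumption that $(1-3\delta)^{\ell}\ge 0$ (equivalently $\delta\le 1/3$), which is how the lemma is used downstream.
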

\begin{proof}
  Let $\ind{S}$ denote the $n$-dimensional indicator vector of set $S$.
  We pick a source vertex $v \in S$ with probability $\degree(v)/\volume(S)$, 
  where $\degree(v)$ is degree of $v$ and $\volume(S)$ is the sum of degrees of vertices in set $S$. 
  This defines an initial probability distribution denoted by $\vec{p}^{\,0}_S$ on the vertex set $V$, 
  where, $\vec{p}^{\,0}_S(v) = d(v)/\volume(S)$ for $v \in S$ and $\vec{p}^{\,0}_S = 0$ for $v \notin S$.
  Note that $\vec{p}^{\,0}_S = \frac{1}{\volume (S)} D \ind{S}$,
  where,  $D$ is the diagonal degree matrix of $G$.
  Denote by $M$ the transition matrix of a lazy random walk on $G$.
 The endpoint probability distribution~$\vec{p}^{\,\ell}_S$ of an $\ell$-step lazy
 random walk on $G$ starting from a vertex chosen from $S$ according to $\vec{p}^{\,0}_S$ is
\[\vec{p}_S^{\,\ell} = M^{\ell} \vec{p}^{\,0}_S = (1/\volume (S)) M^{\ell} D \ind{S}.\]
Recall that $M$ can be expressed in terms of the normalized Laplacian matrix $N = I - D^{-1/2} A D^{-1/2}$ of $G$ as
$M = D^{1/2} (I - N/2) D^{-1/2}$. (See Section~\ref{sec:Prelim}.)

It follows therefore that
$ \vec{p}_S^{\,\ell} = (1/\volume (S)) \left(D^{1/2} \left(I - N/2 \right) D^{-1/2} \right)^{\ell} D \ind{S}$.

The trap probability~$\trap(S,\ell)$ of an $\ell$-step lazy random walk
starting from a random vertex in~$S$  picked according to $\vec{p}^{\,0}_S$ can
be expressed as the inner product of vectors~$\vec{p}_S^{\,\ell}$ and~$\ind{S}$:
\begin{align*}
	\begin{aligned}
 	\trap(S, \ell) &= \frac{1}{\volume (S)} \trind{S} M^{\ell} D \ind{S} =   \frac{1}{\volume (S)} \trind{S} \left(D^{1/2} \left(I - N/2 \right) D^{-1/2}\right)^{\ell}    D \ind{S}\\
                    	  & =   \frac{1}{\volume (S)} \trans{(D^{1/2} \ind{S})} \left(I - N/2 \right)^{\ell} (D^{1/2} \ind{S}).
	\end{aligned}
\end{align*}
Recall that $0 = \omega_1 \le \omega_2 \le \cdots \le \omega_n < 2$ are the
eigenvalues of $N$ and $\vec e_1, \vec e_2, \ldots, \vec e_n$ denote the
corresponding unit eigenvectors.
We can express $D^{1/2} \ind{S}$ in the orthonormal basis defined by the eigenvectors of $N$ as
$D^{1/2} \ind{S} = \sum_i \alpha_i \vec e_i$. It follows that
\begin{equation}\label{eq:innerp}
	 \sum_i \alpha_i^2 = \langle D^{1/2} \ind{S},  D^{1/2}\ind{S} \rangle= \volume(S).
\end{equation}
Taking the
quadratic form of $N$ for vector $D^{1/2} \ind{S}$, we get
\begin{align*}
\begin{aligned}
\trans{(D^{1/2} \ind{S})} N (D^{1/2} \ind{S}) &= \trans{(D^{1/2} \ind{S})} I  (D^{1/2} \ind{S})  -  \trans{(D^{1/2} \ind{S})} (D^{-1/2} A D^{-1/2}) (D^{1/2} \ind{S}) \\ 
                    &=\volume (S) - \trans{(D^{1/2} \ind{S})} (D^{-1/2} A D^{-1/2}) (D^{1/2} \ind{S})
                     = \volume (S) - \trind{S} A \ind{S}.
\end{aligned}
\end{align*}
Note that the term $ \trind{S} A \ind{S}$ corresponds to the number of edges in $S \times S$.
Therefore. it follows that
$$
	\trans{(D^{1/2} \ind{S})} N (D^{1/2} \ind{S}) = E(S, \comp{S}).
$$
Since the conductance of the cut $(S, \comp{S})$ is at most $\delta$,
we have that
\begin{equation}\label{eq:quadratic1}
	\trans{(D^{1/2} \ind{S})} N (D^{1/2} \ind{S})=E(S, \comp{S}) \le \delta\cdot \volume (S).
\end{equation}
Expressing $D^{1/2} \ind S$ as $\sum_i \alpha_i \vec e_i$, the quadratic form
of $N$ for $D^{1/2} \ind S$ can also be written as 
\begin{equation}\label{eq:quadratic2}
 	\trans{(D^{1/2} \ind{S})} N (D^{1/2} \ind{S}) = \trans{ (\sum_i \alpha_i \vec e_i)} N  (\sum_i \alpha_i \vec e_i) = \sum_i \alpha_i^2 \omega_i.
 \end{equation}
Combining from Eq.s~\eqref{eq:innerp}, \eqref{eq:quadratic1} and~\eqref{eq:quadratic2}, we get that
\begin{equation}\label{eq:quadratic3}
  \trans{(D^{1/2} \ind{S})} (I - N/2) (D^{1/2} \ind{S}) = \sum_i \alpha_i^2 - \frac{1}{2} \sum_i \alpha_i^2 \omega_i \ge \volume(S) - \frac{\delta}{2} \volume(S).
\end{equation}
Recall that $0 = \omega_1 \le \omega_2 \le \ldots \le \omega_n < 2$ are the
eigenvalues of $N$ and let $\vec e_1, \vec e_2, \ldots, \vec e_n$ be the
corresponding unit eigenvectors. Correspondingly, we can define
a set of eigenvalues $1= \lambda_1 \ge \lambda_2 \ge \ldots \ge \lambda_n > 0$
and the same set of eigenvectors  $\vec e_1, \vec e_2, \ldots, \vec e_n$ for $I - N/2$.
Notice that for each $i$, $\lambda_i = 1 - \omega_i/2$.
With this translation of eigenspace, we get that 
\begin{equation*} 
\trans{(D^{1/2} \ind{S})} (I -N/2) (D^{1/2} \ind{S}) = \sum_i \alpha_i^2 \lambda_i .
\end{equation*}
We call the quantity
$\sum_i \alpha_i^2$ the \emph{coefficient sum of the eigenvalue set}.  We also
call an eigenvalue~$\lambda_i$ of $I - N/2$ (and the corresponding eigenvector~$\vec e_i$)
\emph{heavy} if $\lambda_i \ge 1 - 3\delta$.  We denote by $H$ the index set of
the heavy eigenvalues and let $\comp{H}$ be the index set of the rest.  Since
$\sum_i \alpha_i^2 \lambda_i \ge  (1 - \delta/2) \volume (S)$ is large for a set with small
conductance, we expect many of the coefficients $\alpha_i^2$ corresponding to
heavy eigenvalues to be large.  This would slow down the convergence of the
random walk and make the trap probability for our low-conductance set $S$
large. The following 
claim
establishes a lower bound on the contribution of the
index set $H$ to the coefficient sum.
\begin{claim}\label{cl:xvalue}
For~$\{\alpha_i\}_i, H$, and~$s$ as defined above,
$$ \sum_{i \in H} \alpha_i^2 \ge \frac{5}{6} \volume(S).$$
\end{claim}


\begin{proof}[Proof of the claim]
   Let $x$ denote the coefficient sum of the set $H$ of heavy eigenvalues: that
   is,  $ x \eqdef \sum_{i \in H} \alpha_i^2$.
 The following expression follows by the definition of the set $H$ and its
 coefficient sum $x$:
 \begin{equation*}
   \sum_i \alpha_i^2 \lambda_i = \sum_{i \in H} \alpha_i^2
   \lambda_i + \sum_{i \in \comp{H}} \alpha_i^2 \lambda_i
    \le x  + (\sum_i \alpha_i^2 -x ) (1-3\delta).
 \end{equation*}
 The second inequality above follows by upper bounding every~$\lambda_i$ with
 $i\in H$ by $1$ and every~$\lambda_i$ with~$i \in \comp{H}$ by~$1-3\delta$.
 Recall that $\sum_i \alpha_i^2 = \langle D^{1/2} \ind{S}, D^{1/2} \ind{S} \rangle = \volume(S)$ and we
 just proved that $\sum_i \alpha_i^2 \lambda_i \ge  (1 - \delta/2) \volume(S)$
 in~\eqref{eq:quadratic3}.  It follows that
 \begin{equation*}
   (1 - \delta/2) \volume(S)  \le \sum_i \alpha_i^2  \lambda_i   \le x +
   (\volume(S)-x)(1-3\delta).   
 \end{equation*}
 Rearranging the inequality above, we get that $x \ge 5 \volume(S)/6$.
\end{proof}
Next, we use 
Claim~\ref{cl:xvalue}
to obtain a lower bound on the average trap
probability of set $S$ in terms of the conductance of the cut $(S, \comp{S})$.
\begin{align*}
	\begin{aligned}
  \trap(S, \ell) &= \frac{1}{\volume(S)} \trans{(D^{1/2} \ind{S})} (I - N/2)^{\ell} (D^{1/2} \ind{S}) \\
  		     &= \frac{1}{\volume(S)} \trans{(\sum_i \alpha_i \vec e_i )} (I - N/2)^{\ell} (\sum_i \alpha_i \vec e_i)\\
  	              &= \frac{1}{\volume(S)} \trans{(\sum_i \alpha_i  \vec e_i )} (\sum_i \alpha_i
 			 \lambda_i^{\ell} \vec e_i) = \frac{1}{\volume(S)} \sum_i \alpha_i^2
                           \lambda_i^{\ell}.
  \end{aligned}
\end{align*}
Further, focusing on the contribution of the index set~$H$ to the trap
probability,
\begin{align}
  \trap(S, \ell)
  &= \frac{1}{\volume(S)} \sum_i \alpha_i^2 \lambda_i^{\ell } \ge \frac{1}{\volume(S)} \sum_{i\in H} \alpha_i^2 \lambda_i^{\ell }\nonumber\\
  &=  \frac{1}{\volume(S)} (\alpha_1^2  \lambda_1 +  \sum_{i \in
    H\setminus \{1 \}} \alpha_i^2  \lambda_i^{\ell} )\nonumber\\ 
  & \ge \frac{1}{\volume(S)}\left(\alpha_1^2 + ( 5\volume(S)/6   -
    \alpha_1^2) \left(1-3\delta\right)^{\ell} \right). \label{eq:trap}
\end{align}
The last inequality follows by the definition of a heavy eigenvalue and by
Lemma~\ref{cl:xvalue} 
(we have that~$\sum_{i \in H} \alpha_i^2 \ge 5\volume(S)/6$).
By definition, $\lambda_1 =1$ and
$\vec e_1 = \sqrt{\vec d}/\sqrt{2m} $, where, $\vec d$ is the vector of vertex degrees. It follows that
  $\alpha_1 = \langle D^{1/2} \ind{S}, \vec e_1 \rangle = \volume (S)/\sqrt{2m}$.
Plugging in the values of $\alpha_1$ in~\eqref{eq:trap}, we get
\begin{align*}
          \trap(S, \ell) &\ge \frac{1}{\volume(S)} \left( \frac{(\volume(S))^2}{2m} +
            \left(\frac{5\volume(S)}{6} - \frac{(\volume(S))^2}{2m} \right) \left(1 -
              3\delta \right)^{\ell} \right) \\ 
          & = \frac{\volume(S)}{2m} + \left( \frac{5}{6} - \frac{\volume(S)}{2m}
          \right) (1 - 3\delta)^{\ell}.
\end{align*}
\end{proof} 
Next lemma states that every subset $T \subset S$ of large enough volume has high trap probability.  
 More specifically, we prove a lower bound on the probability of
 an $\ell$-step random walk starting from a vertex chosen at random (with probability proportion to its degree) from $T$ finishing at some vertex in $T$.
\begin{lemma}\label{lem:trapprob2}
 Consider sets $T \subseteq S \subseteq V$, such that the
cut $(S, \comp{S})$ has conductance at most $\delta$ 
and that~$\volume(T) = (1 -\eta) \volume(S)$ for some $0 < \eta < 5/6$, 
then for any integer~$\ell>0$, there exists a vertex $v \in T$ such that 
\begin{equation}\label{eq:hightrap}
	 \trap(v, T, \ell) \ge \frac{\volume(T)}{2m} + \left(\frac{5}{6} ( 1-
             \sqrt{(6\eta)/5} )^2    - \frac{\volume(T)}{2m} \right)
         (1-3\delta)^{\ell} .
\end{equation}
\end{lemma}
\begin{proof}
  Let $\ind{S}$ and $\ind{T}$ denote the $n$-dimensional indicator vectors of
  sets $S$ and $T$, respectively. As in the proof of Lemma~\ref{lem:trapprob1}, we express
  $D^{1/2}\ind{S}$ and $D^{1/2}\ind{T}$ in the orthonormal basis defined by the eigenvectors
  of the normalized Laplacian matrix $N$ as $ D^{1/2}\ind{S} = \sum_{i} \alpha_i \vec e_i$ and
  $D^{1/2}\ind{T} = \sum_{i} \beta_i \vec e_i$.
  Since the conductance of the cut $(S, \comp{S})$ is at most $\delta$, Claim~\ref{cl:xvalue}
from Lemma~\ref{lem:trapprob1} holds.
We have that
          $\sum_{i} \alpha_i^2 \ge \sum_{i \in H} \alpha_i^2 \ge \frac{5}{6} \volume(S)$.  
By the definition of $D^{1/2}\ind{S}$ and $D^{1/2} \ind{T}$, we have
  \[
  	\norm{D^{1/2}\ind{S} - D^{1/2} \ind{T} }_2^2 = \volume(S) - \volume(T) = \volume(S) - (1-\eta) \cdot \volume(S)  = \eta \cdot \volume(S).
  \]
  Furthermore, the following follows from the expression of $D^{1/2}\ind{S}$
  and $D^{1/2}\ind{T}$ in terms of the eigenvectors of $I - N/2$:
  \[
  	\eta \cdot \volume(S) =\norm{D^{1/2}\ind{S} - D^{1/2} \ind{T}}_2^2 = \sum_{i} (\alpha_i - \beta_i)^2. 
  \]
  Applying the triangle inequality,
  $\norm{\vec a - \vec b} \ge \norm{\vec a} - \norm{\vec b}$ and upper
  bounding $\sqrt{\sum_{i\in H} (\alpha_i - \beta_i)^2}$ by
  $\sqrt{\sum_{i} (\alpha_i - \beta_i)^2} = \sqrt{\eta \cdot \volume(S)}$, we get that
\begin{align*}
 	\begin{aligned}
          \sum_{i \in H} \beta_i^2 \ge & \left( \sqrt{\sum_{i \in H}
              \alpha_i^2} - \sqrt{\sum_{i \in H} \left(\alpha_i - \beta_i
              \right)^2} \right)^2\\ 
          \ge & \left( \sqrt{(5 \volume(S))/6 } - \sqrt{\eta \cdot \volume(S) } \right)^2 =
          \frac{5}{6} \volume(S) \left(1 - \sqrt {(6 \eta)/5 }\right)^2.
 	\end{aligned}
\end{align*}
Reasoning as in Lemma~\ref{lem:trapprob1} and applying $\lambda_i \ge (1- 3
\delta)$, for all~$i \in H$, we can bound the average trap probability over set
$T$ as  
\begin{align}\label{eq:trapT}
	\begin{aligned}
          \trap(T, \ell) \ge& \frac{1}{\volume(T)} \sum_{i \in H} \beta_i^2
          \lambda_i^{\ell} = \frac{1}{\volume(T)} (  \beta_1^2 \lambda_1 +  \sum_{i
              \in H \setminus \{1\}} \beta_i^2 \lambda_i^{\ell} )\\ 
          \ge& \frac{1}{\volume(T)} \left ( \frac{(\volume(T))^2}{2m} +  \left(\frac{5\volume(S)}{6} \left(
                1- \sqrt{(6 \eta)/5}\right)^2 - \frac{(\volume(T))^2}{2m}   \right
            ). (1 - 3 \delta)^{\ell} \right)\\ 
          =& \frac{\volume(T)}{2m} + \left(\frac{5}{6} \cdot \frac{\volume(S)}{\volume(T)}  \left ( 1-
              \sqrt{(6 \eta)/5}\right)^2 - \frac{\volume(T)}{2m} \right) (1-3\delta)^{\ell}.
	\end{aligned}
\end{align}
The second-to-last inequality follows from the definition of the first
eigenvalue, eigenvector pair of matrix $I - N/2$.  By definition,
$\lambda_1 =1$ and $\vec e_1 = d^{1/2}/\sqrt{2m}$. Therefore, we have
that~$\beta_1 = \langle D^{1/2}\ind{T}, \vec e_1 \rangle = \volume(T)$.  Since
$T \subseteq S$, $\volume(S)/\volume(T) \ge 1$. The lemma follows by
substituting $\volume(S)/\volume(T) $ by its lower bound $1$
in~\eqref{eq:trapT}.
Thus it follows that there exists a sticky vertex~$v \in T$ such that
 $ \trap(v, T, \ell)$ is given by~\eqref{eq:hightrap}.
 \end{proof}

Lemma~\ref{lem:trapprob2} implies the following corollary.
\begin{corollary}\label{cor:highcountTrapT}
  Consider a set $S \subset V$, such that the cut $(S, \comp{S})$ has
  conductance at most~$\delta$.  
  Given any~$0 < \eta < 5/6$ and integer
  $\ell >0$, there exist a set of volume at least $\eta \cdot \volume(S)$, such that every 
  vertex in this set is sticky. In other words, for every vertex $v$ is this set,
  there exists~$T \subseteq S$ of volume $ \volume(T)=(1-\eta) \cdot \volume(S)$ such that
  $\trap(v, T, \ell$) is given by~\eqref{eq:hightrap}.
\end{corollary}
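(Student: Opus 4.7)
My plan is to derive the corollary from Lemma~\ref{lem:trapprob2} by a short contrapositive averaging argument. Let $P \subseteq S$ denote the set of sticky vertices, that is, those $v \in S$ for which there exists some $T \subseteq S$ containing $v$ with $\volume(T) = (1-\eta)\volume(S)$ and trap probability at least the bound
\[
  B \;\eqdef\; \frac{(1-\eta)\volume(S)}{2m} + \left(\frac{5}{6}\Bigl(1-\sqrt{\tfrac{6\eta}{5}}\Bigr)^{2} - \frac{(1-\eta)\volume(S)}{2m}\right)(1-3\delta)^{\ell}
\]
appearing in~\eqref{eq:hightrap}. I want to show $\volume(P)\ge\eta\cdot\volume(S)$, and I will do so by contradiction.

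Assume for contradiction that $\volume(P)<\eta\cdot\volume(S)$, and set $U = S\setminus P$, so that $\volume(U) > (1-\eta)\volume(S)$. Since $U\subseteq S$, I can then select a subset $T \subseteq U$ with $\volume(T) = (1-\eta)\volume(S)$ (modulo a single-vertex integrality adjustment, discussed below). Now $T \subseteq S$ has exactly the volume required by Lemma~\ref{lem:trapprob2}, and so
\[
  \trap(T,\ell) \;=\; \frac{1}{\volume(T)}\sum_{u\in T}\degree(u)\cdot \trap(u,T,\ell) \;\ge\; B.
\]
A standard weighted-averaging step then produces at least one vertex $v\in T$ whose individual trap probability satisfies $\trap(v,T,\ell)\ge B$, since otherwise the weighted average would be strictly below $B$. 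But $T\subseteq S$, $v\in T$, and $\volume(T)=(1-\eta)\volume(S)$, so $T$ itself witnesses that $v$ is sticky; this contradicts $v\in T\subseteq U = S\setminus P$. Hence $\volume(P)\ge \eta\cdot\volume(S)$, which is the claim of the corollary.

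The only delicate point I foresee is the integrality step of selecting $T\subseteq U$ with volume exactly $(1-\eta)\volume(S)$, since $\volume(U)$ is an integer and may straddle the target value. However, inspection of the proof of Lemma~\ref{lem:trapprob2} shows that it only really needs the inequality $\norm{D^{1/2}\ind{S}-D^{1/2}\ind{T}}_{2}^{2}\le\eta\cdot\volume(S)$, so any $T\subseteq U$ with $\volume(T)\ge(1-\eta)\volume(S)$ gives the same (or a stronger) bound, and the argument goes through without modification. I do not see any substantive conceptual obstacle beyond this routine bookkeeping; the core idea is simply that Lemma~\ref{lem:trapprob2} provides an average over any candidate subset, and an averaging argument then forces the existence of sticky vertices whenever too many non-sticky ones are postulated.
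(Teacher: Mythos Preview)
Your proposal is correct and follows essentially the same approach as the paper: both arguments apply Lemma~\ref{lem:trapprob2} to a subset $T\subseteq S$ of volume $(1-\eta)\volume(S)$ drawn from the putative non-sticky vertices, then use averaging to extract a sticky vertex, contradicting the assumption that too few sticky vertices exist. The only cosmetic difference is that the paper phrases this as an iterative extraction (peeling off one sticky vertex at a time until fewer than $(1-\eta)\volume(S)$ remains), whereas you give the equivalent one-shot contradiction; your handling of the integrality issue is also fine and applies equally to the paper's version.
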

\begin{proof}
  Let $P \subseteq S$ denote the set of all the vertices for which
  eq~\eqref{eq:hightrap} holds for some set~$T$.  One can extract $P$ using
  the following iterative procedure.  To begin with, we pick an arbitrary
  subset $T$ of volume $(1-\eta) \cdot \volume(S)$ from $S$.  By Lemma~\ref{lem:trapprob2},
  there exists a vertex $v$ in $T$ with the desired trap probability.  We
  remove $v$ from $S$ and add it to $P$. Let $R$ be the set of remaining
  vertices of $S$.  We then extract another subset $T$ of volume $(1-\eta) \cdot \volume(S)$
  from $R$ and this process continues until we do not have sufficient volume
   left in $R$.  It is easy to see that~$\volume(R) < (1-\eta) \cdot \volume(S)$ when this
  process ends. This implies that the volume of $P$ is at least $\eta \cdot \volume(S)$.
\end{proof}

We build on the
following combinatorial lemma from~\cite{LiP15}:
\begin{lemma}[Lemma~9 of~\cite{LiP15}]\label{lem:combPan}
  Let $G = (V, E)$ be an $m$-edge graph. If there exists a set~$P\subseteq V$
  such that $\volume (P) \le \epsilon m/10$ and the subgraph
  $G[V \setminus P ]$ that is induced by the vertex set $V\setminus P$ has conductance at least~$\phi'$, then there exists an
  algorithm that modifies at most $\epsilon m$ edges of $G$ to get a
  graph $G' = (V, E')$ with conductance at least $\phi/'3$.
\end{lemma}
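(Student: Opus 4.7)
The plan is to exhibit an explicit edge-modification scheme and then verify both the modification budget and the conductance bound of the resulting graph. The intuition is that $P$ is the ``problematic'' set: it has small volume, and removing it leaves a good expander $G[V\setminus P]$. So it should suffice to rewire the (few) edges touching $P$ so that vertices of $P$ are well-integrated into $G[V\setminus P]$, and then exploit the expansion of $G[V\setminus P]$ on any cut.

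Concretely, I would construct $G'$ as follows. Delete every edge of $G$ that has at least one endpoint in $P$; the number of such edges is at most $\volume(P) \le \epsilon m/10$. Then, for every vertex $u\in P$, add $\degree_G(u)$ new edges from $u$ to distinct vertices in $V\setminus P$, chosen by a degree-proportional rule (e.g., a fractional assignment rounded by a Hall-type matching) so that each $v\in V\setminus P$ gains at most $O(\degree_G(v)\cdot \volume(P)/m) \le \degree_G(v)/2$ new incident edges. This preserves $\degree_{G'}(u)=\degree_G(u)$ for $u\in P$ and ensures $\degree_{G'}(v)\in [\,\tfrac{1}{2}\degree_G(v),\tfrac{3}{2}\degree_G(v)\,]$ for $v\in V\setminus P$, so $G$ and $G'$ have comparable volume measures up to factor $2$. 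The total number of modified edges is at most $\volume(P)+\volume(P)\le \epsilon m/5 < \epsilon m$, meeting the budget.

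For the conductance bound, take any cut $(A,\bar A)$ in $G'$ with $\volume_{G'}(A)\le \volume_{G'}(\bar A)$ and decompose $A=A_1\sqcup A_2$ with $A_1\subseteq V\setminus P$, $A_2\subseteq P$, and likewise $\bar A=B_1\sqcup B_2$. Edges of $G'$ crossing the cut split into edges of $G[V\setminus P]$ between $A_1$ and $B_1$ (which survived unchanged) and new/retained edges incident to $A_2\cup B_2$. I would argue by cases. If $\volume_{G[V\setminus P]}(A_1)$ is a constant fraction (say $\ge 1/2$) of $\volume_{G'}(A)$, then the conductance hypothesis on $G[V\setminus P]$ gives $E_G(A_1,B_1)\ge \phi'\cdot \min\{\volume_{G[V\setminus P]}(A_1),\volume_{G[V\setminus P]}(B_1)\}$, and the factor-$2$ volume comparison between $G$ and $G'$ costs at most another factor of $2$, yielding a cut of size at least $\phi'\cdot\volume_{G'}(A)/3$. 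Otherwise $A_2$ contributes the dominant share of $A$'s volume in $G'$; since the new edges out of $A_2$ are spread across $V\setminus P$ and $\volume(P)$ is small, most of them go to $B_1$, again yielding cut size $\ge \phi'\cdot\volume_{G'}(A)/3$ (in fact a stronger bound, because the expansion here is essentially $1$).

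The main obstacle will be twofold. First, constructing the rewiring so that the degree-distortion factor stays bounded by a small constant on $V\setminus P$; this requires the degree-proportional distribution plus a Hall-type argument, and it is where the hypothesis $\volume(P)\le \epsilon m/10$ is essential. Second, pinning down the constant $3$ in the final bound requires carefully tracking three multiplicative losses: the volume discrepancy between $G$ and $G'$ on $V\setminus P$, the possible mismatch between $\volume_{G[V\setminus P]}$ and $\volume_{G'}$ on $A_1$, and the interpolation between the two cut cases above. Assembling these so that the product stays at most $3$ is the technical crux.
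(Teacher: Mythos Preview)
The paper does not prove this lemma: it is quoted verbatim as Lemma~9 of~\cite{LiP15} and used as a black box in the proof of Lemma~\ref{lem:highVolSet}. There is therefore no in-paper proof to compare your proposal against.

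As a standalone attempt, your rewiring strategy is the natural one and is close in spirit to how such statements are typically proved. One claim, however, is false as stated: the per-vertex lower bound $\degree_{G'}(v)\ge \tfrac{1}{2}\degree_G(v)$ for $v\in V\setminus P$ need not hold. A vertex $v\in V\setminus P$ may have all of its $G$-neighbours in $P$; after the deletion step its degree drops to $0$, and your degree-proportional rewiring only restores roughly $\degree_G(v)\cdot \volume(P)/\volume(V\setminus P)=O(\epsilon)\cdot \degree_G(v)$ edges, not $\degree_G(v)/2$. What \emph{does} hold is the aggregate bound (total degree lost on $V\setminus P$ is at most $\volume(P)$), and, more usefully, the per-vertex comparison between $\degree_{G'}(v)$ and $\degree_{G[V\setminus P]}(v)$ rather than $\degree_G(v)$: since the conductance hypothesis is on $G[V\setminus P]$, you should run the entire cut argument with $\volume_{G[V\setminus P]}$ as the reference and track how much the rewiring perturbs it. With that reframing your case analysis goes through, though nailing the constant $3$ (as opposed to some absolute constant) requires choosing the case threshold and the distribution rule carefully; your sketch does not yet pin this down.
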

In the following lemma, we show the existence of a high volume set $A$ with low enough conductance in a graph that is far from being a good conductor.
\begin{lemma}\label{lem:highVolSet}
Let $G= (V,E)$ be an $n$-vertex, $m$-edge graph such that $G$ is $\epsilon$-far from having conductance at least $\alpha^2/2880$,
then there exists a set $A \subseteq V$ such that $\volume(A) \ge \epsilon m /10$ and conductance of cut $(A, \comp A)$ is at most
$\alpha^2/960$.
\end{lemma}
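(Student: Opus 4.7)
The plan is to argue by contrapositive through Lemma~\ref{lem:combPan}. I assume for contradiction that every set $A \subseteq V$ with $\volume(A) \ge \epsilon m/10$ has cut conductance strictly greater than $\alpha^2/960$, and aim to build a set $P$ with $\volume(P) \le \epsilon m/10$ such that $G[V\setminus P]$ has conductance at least $\alpha^2/960$. Lemma~\ref{lem:combPan} will then yield a graph with conductance at least $\alpha^2/2880$ reachable from $G$ by at most $\epsilon m$ edge modifications, contradicting the $\epsilon$-farness hypothesis.

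The set $P$ is constructed by iteratively peeling low-conductance cuts. Starting from $P_0 = \emptyset$, at iteration $i$ I halt if $G[V\setminus P_i]$ already has conductance at least $\alpha^2/960$; otherwise I pick a cut $(S_i, T_i)$ in $G[V\setminus P_i]$ of conductance less than $\alpha^2/960$ with $S_i$ the smaller-volume side in the induced subgraph, and set $P_{i+1} := P_i \cup S_i$. The critical invariant, maintained by induction, is
\[
  E_G(P_i, V\setminus P_i) \le (\alpha^2/960)\,\volume_G(P_i).
\]
I verify this by writing $E_G(P_{i+1}, V\setminus P_{i+1})$ as edges from $P_i$ to $T_i$ (a subset of the old crossing edges from $P_i$, bounded by the inductive hypothesis) plus the new edges $E_{G[V\setminus P_i]}(S_i, T_i) \le (\alpha^2/960)\,\volume_G(S_i)$; volumes simply add, closing the induction.

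At termination there are two cases. If $\volume(P_t) \le \epsilon m/10$, then $G[V\setminus P_t]$ has conductance at least $\alpha^2/960$ by the halting condition, and Lemma~\ref{lem:combPan} produces the desired contradiction. Otherwise, the step at which $\volume(P_t)$ first exceeds $\epsilon m/10$ already yields a set $A := P_t$ of volume at least $\epsilon m/10$ with, by the invariant, $E_G(A, \comp{A}) \le (\alpha^2/960)\volume(A)$; when $\volume(A) \le m$ this bounds cut conductance by $\alpha^2/960$, contradicting the working assumption and simultaneously proving the lemma.

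The main obstacle I expect to confront is the corner case where peeling overshoots so that $\volume(P_t) > m$ and the minimum side in the cut conductance definition switches to $\comp{P_t}$. Since each $S_i$ is the smaller induced-subgraph side, its $G$-volume is at most roughly $(2m - \volume(P_i))/2$, so the overshoot is bounded by $\volume(P_{t-1})/2 \le \epsilon m/20$. Then $\volume(\comp{P_t}) \ge m - \epsilon m/20 \ge \epsilon m/10$, and I plan to substitute $A := \comp{P_t}$; the cut conductance inflates by at most $\volume(P_t)/\volume(\comp{P_t}) = 1 + O(\epsilon)$, which can be absorbed into the constant $\alpha^2/960$ by initially peeling with a slightly smaller threshold (the factor-three slack between $\alpha^2/2880$ and $\alpha^2/960$ suffices).
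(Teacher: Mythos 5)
Your proposal follows essentially the same route as the paper: iteratively peel off low-conductance cuts from the remainder, observe that the union of the peeled pieces still has cut conductance at most $\alpha^2/960$, and invoke Lemma~\ref{lem:combPan} to conclude that the peeled volume must reach $\epsilon m/10$, else $G$ could be repaired with at most $\epsilon m$ edge modifications. Your write-up is in fact more careful than the paper's, which asserts the conductance invariant for the union without the inductive edge-counting you supply and does not address the corner case where the peeled volume overshoots $\volume(V)/2$.
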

\begin{proof}
We build the set $A$ iteratively from sets $A_1, A_2, \ldots, A_i,\ldots$ as follows.
We begin with $A=A_0 = \emptyset$ and $\comp A_0 = V$.
In iteration $i$, we look for a cut $(A_i, \comp A_i)$ in $\comp A_{i-1}$ such that
$\volume(A_i) \le \volume(\comp{A_i})$ and conductance of cut $(A_i, \comp A_i)$
is at most $\alpha^2/960$. We set $A = A \cup A_i$ and $\comp A_i = \comp A_{i-1}\setminus A_i$. 
This process continues until we can find such a cut in $A_{i-1}$.
It follows from the procedure that the conductance of the cut $(A, \comp A)$ when we stop is
at most $\alpha^2/960$.
Next, we claim that the volume of the set $A$ is at least $\epsilon m /10$. 
Let us assume for contradiction that $\volume (A) < \epsilon m/10$.
Then, by Lemma~\ref{lem:combPan}, with at most $\epsilon m$ edge updates,
we can get a graph $G'$ with conductance at least $\alpha^2/2880$ from $G$.
But we know from the statement of the lemma that $G$ is $\epsilon$-far from
having conductance at least $\alpha^2/2880$.
Hence, the claim follows by contradiction.
\end{proof}

Finally, we need the following classical relation between the conductance or
Cheeger constant of a Markov chain and its second largest eigenvalue.
\begin{theorem}[\cite{AlonMilman,Alon86eigenvalues,Dodziuk}]
  \label{thm:cheegerconstant}
  Let $P$ be a reversible lazy chain (i.e., for all~$x$, $P(x,x) \ge 1/2$) with
  Cheeger constant $\phi_*$.  Let $\lambda_2$ be the second largest eigenvalue
  of $P$. Then,
$\frac{\phi_*^{2}}{2} \le 1- \lambda_2 \le 2\phi_*$.
\end{theorem}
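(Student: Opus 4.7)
The plan is to prove both inequalities via the Rayleigh quotient characterisation of $\lambda_2$. Since $P$ is reversible, it is self-adjoint in $L^2(\pi)$, so writing the Dirichlet form $\mathcal{E}(f,f) = \tfrac12 \sum_{x,y} \pi(x) P(x,y) (f(x)-f(y))^2$ and the inner product $\langle f,f\rangle_\pi = \sum_x \pi(x) f(x)^2$ one has
\[
1 - \lambda_2 \;=\; \min_{f \not\equiv 0,\; \sum_x \pi(x) f(x) = 0} \frac{\mathcal{E}(f,f)}{\langle f, f\rangle_\pi}.
\]
In my notation, $P(x,y)$ plays the role of $\mathcal{M}(y,x)$ from Definition~\ref{def:cheegerconstant}, i.e.\ the one-step probability of going from $x$ to $y$.

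\textbf{Upper bound $1 - \lambda_2 \le 2\phi_*$.} I would plug in an indicator test function. Fix $S$ with $\pi(S) \le 1/2$ and take the mean-zero $f = \mathbbm{1}_S - \pi(S)\mathbbm{1}_V$. A direct computation yields $\langle f,f\rangle_\pi = \pi(S)(1-\pi(S))$, and the Dirichlet form collapses to the cut weight $\mathcal{E}(f,f) = \sum_{x \in S, y \notin S} \pi(x) P(x,y) = \phi(S)\cdot\pi(S)$ (here reversibility makes the $x \in S, y \notin S$ and $x \notin S, y \in S$ contributions add up cleanly). The variational principle then gives $1 - \lambda_2 \le \phi(S)/(1-\pi(S)) \le 2\phi(S)$, and taking the infimum over $S$ finishes this direction.

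\textbf{Lower bound $\phi_*^2/2 \le 1-\lambda_2$.} This is the classical discrete Cheeger inequality, which I would prove by the sweep-cut argument. Let $g$ be an eigenvector for $\lambda_2$. Truncate $g$ at a $\pi$-weighted median to obtain $f \ge 0$ whose support has $\pi$-mass at most $1/2$ and still satisfies $\mathcal{E}(f,f)/\langle f,f\rangle_\pi \le 1 - \lambda_2$ (a routine truncation lemma). Define the level sets $S_t = \{x : f(x)^2 > t\}$ for $t \ge 0$, each of which has $\pi(S_t) \le 1/2$ and therefore cut value at least $\phi_* \cdot \pi(S_t)$. The key identity is the coarea formula
\[
\int_0^\infty \sum_{x \in S_t,\, y \notin S_t} \pi(x) P(x,y)\, dt \;=\; \sum_{x,y:\, f(x) > f(y)} \pi(x) P(x,y)\bigl(f(x)^2 - f(y)^2\bigr).
\]
Factoring $f(x)^2 - f(y)^2 = (f(x) - f(y))(f(x) + f(y))$ and applying Cauchy--Schwarz with weights $\pi(x) P(x,y)$ bounds the right-hand side by $\sqrt{2\,\mathcal{E}(f,f)\,\langle f,f\rangle_\pi}$, using $\sum_{x,y} \pi(x) P(x,y)(f(x)+f(y))^2 \le 4\langle f,f\rangle_\pi$ from stochasticity and reversibility. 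Combined with $\int_0^\infty \pi(S_t)\, dt = \langle f,f\rangle_\pi$ and $\sum_{x \in S_t, y \notin S_t} \pi(x) P(x,y) \ge \phi_* \pi(S_t)$, this forces $\phi_* \cdot \langle f,f\rangle_\pi \le \sqrt{2\,\mathcal{E}(f,f)\,\langle f,f\rangle_\pi}$, hence $\phi_*^2/2 \le \mathcal{E}(f,f)/\langle f,f\rangle_\pi \le 1-\lambda_2$.

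\textbf{Main obstacle.} The delicate step is the Cauchy--Schwarz coarea bound in the hard direction: one must carefully pair each ordered edge with a unique threshold, and to verify the auxiliary estimate $\sum_{x,y} \pi(x)P(x,y)(f(x)+f(y))^2 \le 4\langle f,f\rangle_\pi$ from reversibility. The truncation lemma, showing that shifting $g$ by its median and taking a positive part does not increase the Rayleigh quotient, is a second subtle point. The laziness hypothesis $P(x,x) \ge 1/2$ enters implicitly to guarantee $\lambda_2 \ge 0$, so that $\sqrt{1-\lambda_2}$ is real and the bound $\phi_* \le \sqrt{2(1-\lambda_2)}$ is meaningful.
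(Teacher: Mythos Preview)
The paper does not actually prove this theorem: it is stated with citations to the classical references \cite{AlonMilman,Alon86eigenvalues,Dodziuk} and used as a black box in the proof of Theorem~\ref{thm:mainthm}. Your sketch is the standard variational/sweep-cut proof of the discrete Cheeger inequality and is correct in outline; it is essentially what one finds in those references (or in standard expositions such as Chung~\cite{chung97} or Sinclair~\cite{Sinclair}), so there is no meaningful divergence to compare.
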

We can now state our main theorem.
\begin{theorem}
\label{thm:mainthm}
  For an input graph $G = (V,E)$,
  parameters~$0 < \alpha < 1$ and~$\epsilon > 0$, 
the distributed algorithm described in Section~\ref{sec:alg} 
\begin{itemize}
\item outputs Accept, with probability at least $2/3$, on
  every vertex of $G$ if $G$ is an $\alpha$-conductor.
\item outputs Reject, with probability at least $2/3$, on at least one
  vertex of $G$ if $G$ is $\epsilon$-far from any~$(\alpha^2/2880)$-conductor. 
\end{itemize}
The algorithm uses $O(\log n/\alpha^2)$ communication rounds.
\end{theorem}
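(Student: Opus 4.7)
My plan is to split the analysis of Theorem~\ref{thm:mainthm} into three pieces: round complexity, completeness (accepting $\alpha$-conductors), and soundness (rejecting graphs $\epsilon$-far from $(\alpha^2/2880)$-conductors). The round complexity is immediate: Algorithm~\ref{alg:moveOneStep} is invoked for $\ell = 32\log n/\alpha^2$ synchronous rounds, followed by a purely local check at every vertex, yielding $O(\log n)$ rounds for constant $\alpha$. Observation~\ref{obs:rejectForLargeMessage} contributes only an $e^{-23}$ additive term to the error on the completeness side due to congestion-driven abort.

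For completeness, I would show that when $G$ is an $\alpha$-conductor, no vertex $v$ receives more than $\thresholdCount_v = m\degree(v)(1+2n^{-1/4})$ walks from any fixed source $q\in Q$, except with probability $o(1)$. By Eq.~\eqref{eq:cheegereofconductors} and Theorem~\ref{thm:cheegerconstant}, the second-largest eigenvalue $\lambda_2$ of the lazy-walk matrix $M$ satisfies $1-\lambda_2 \ge \phi_*^2/2 = \alpha^2/8$. The standard pointwise mixing bound for reversible chains gives that the probability of going from $q$ to $v$ in $\ell$ steps is at most $\pi(v) + \lambda_2^\ell \sqrt{\pi(v)/\pi(q)} \le \pi(v) + n^{-4}\sqrt{n}$, since $\lambda_2^\ell \le e^{-(\alpha^2/8)\ell} = n^{-4}$. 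Hence the expected number of walks reaching $v$ from $q$ is at most $m\degree(v) + 2m^2 n^{-7/2}$, which for $m\ge n$ lies below $\thresholdCount_v$ by a multiplicative factor of at least $1-\Omega(n^{-1/4})$. A Chernoff bound then pushes the actual count below $\thresholdCount_v$ except with probability $\exp(-\Omega(\sqrt n))$, and a union bound over the at most $n|Q|$ (vertex, source) pairs absorbs the failure.

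For soundness, I invoke Lemma~\ref{lem:highVolSet} to obtain a set $S$ with $\volume(S) \ge \epsilon m/10$ and conductance at most $\delta := \alpha^2/960$. Fixing a small constant $\eta = 1/100$ so that $\tfrac{5}{6}(1-\sqrt{6\eta/5})^2$ strictly exceeds any possible $\volume(T)/(2m) \le 1/2$, Corollary~\ref{cor:highcountTrapT} then produces a set $P\subseteq S$ of sticky vertices of volume at least $\eta\cdot\volume(S) \ge \eta\epsilon m/10$. Since each vertex $v$ is sampled into $Q$ independently with probability $5000\degree(v)/(2m\epsilon)$, the expected number of sticky vertices in $Q$ is at least $250\eta$, so at least one sticky source $q^\star\in Q$ is sampled with constant probability. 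For such $q^\star$ with its associated set $T$, plugging $\delta$ and $\ell$ into Corollary~\ref{cor:highcountTrapT} yields $(1-3\delta)^\ell \ge n^{-1/10-o(1)}$, which makes $K\cdot\trap(q^\star,T,\ell)$ exceed $\sum_{u\in T}\thresholdCount_u$ by a multiplicative factor of at least $1+\Omega(n^{-1/10})$. Averaging identifies some $u^\star\in T$ with $\mathbb{E}[X_{q^\star,u^\star}] \ge (1+\Omega(n^{-1/10}))\thresholdCount_{u^\star}$; since $\mathbb{E}[X_{q^\star,u^\star}] \ge \thresholdCount_{u^\star} \ge m \ge n$, a Chernoff bound gives $X_{q^\star,u^\star} > \thresholdCount_{u^\star}$ except with probability $\exp(-\Omega(n^{4/5}))$, which triggers rejection at $u^\star$.

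The central technical challenge is calibrating the threshold $\thresholdCount_v$ so that the mixing error on $\alpha$-conductors is strictly dominated by the trap-probability excess on far-from-conductor graphs. The mixing term contributes $\lambda_2^\ell \approx n^{-4}$, while the trap excess is $\approx n^{-1/10}$; the multiplicative slack $2n^{-1/4}$ baked into the threshold lies cleanly between these two scales, so the two-sided test separates. Verifying this separation uniformly over all vertex degrees, together with the Chernoff concentration of $X_{q,v}$ on both sides and the probability of sampling a sticky source, is what the proof boils down to.
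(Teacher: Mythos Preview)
Your proposal is correct and follows the paper's approach in its broad strokes: the same mixing bound via Theorem~\ref{thm:cheegerconstant} and Eq.~\eqref{eq:cheegereofconductors} for completeness, and Lemma~\ref{lem:highVolSet} plus Corollary~\ref{cor:highcountTrapT} for soundness, with Chernoff bounds closing both cases.

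There are two small but genuine differences worth flagging. First, on the completeness side you use the sharper pointwise bound $|M^\ell(v,u)-\pi(v)|\le \lambda_2^\ell\sqrt{\pi(v)/\pi(u)}$, whereas the paper invokes the version without the degree-ratio factor; your form is the one that actually holds for irregular graphs, though the extra $\sqrt{n}$ is harmless against $\lambda_2^\ell\le n^{-4}$. Second, and more substantively, in the soundness analysis you reverse the order of concentration and averaging relative to the paper: you first apply pigeonhole to the \emph{expectations} $\mathbb{E}[X_{q^\star,u}]$ to isolate a single vertex $u^\star$ with $\mathbb{E}[X_{q^\star,u^\star}]\ge(1+\Omega(n^{-1/10}))\thresholdCount_{u^\star}$, and then run Chernoff on that one vertex. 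The paper instead applies Chernoff to the aggregate count $Y_{u,T}=\sum_{v\in T}X_{u,v}$ and only afterwards argues by averaging over $T$ that some vertex must exceed its threshold. Your route is arguably cleaner, since the vertex $u^\star$ is deterministic once $q^\star$ is fixed, so the single-vertex Chernoff is immediate; the paper's route requires the slightly informal step that the per-vertex count is ``minimised when the walks within $T$ have mixed well,'' which is really just the same pigeonhole in disguise. Either way the quantitative separation is the same: the trap-probability excess of order $n^{-1/10}$ (you) or $n^{-1/5}$ (the paper, using the cruder $1-x\ge e^{-2x}$) comfortably dominates the $2n^{-1/4}$ slack in $\thresholdCount_v$, which in turn dominates the $n^{-4}$ mixing error.
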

\begin{proof}
  Let us start by showing that, with high enough probability, the algorithm outputs
  \emph{Accept} on every vertex if $G$ is an $\alpha$-conductor.  By
  Observation~\ref{obs:rejectForLargeMessage}, we may reject $G$ and abort the
  algorithm due to congestion with probability at most $e^{-23}$. For now, let
  us assume that this event did not occur. Denote by $\lambda_2$ the second
  largest eigenvalue of the lazy random walk matrix~$M$ on $G$.  It is well known
  (see, e.g.,~\cite{Sinclair}) that, for a pair $u,v \in V$,
  $\left| M^{\ell}(v,u) - \degree(v)/(2m) \right| \le \lambda^{\ell}_2 \le e^{-\ell(1-
    \lambda_2)}$.
It follows from Theorem~\ref{thm:cheegerconstant} that 
 $ \left |M^{\ell}(v,u) - \frac{\degree(v)}{2m} \right| \le e^{-\ell \phi^2_*/2} \le e^{-\frac{\ell \alpha^2}{8}}$,
where the second inequality above follows from the fact that, for a random walk
on an $\alpha$-conductor, $\phi_* = \alpha/2$ (see~\eqref{eq:cheegereofconductors}).  Thus, in an~$\alpha$-conductor, for
$\ell = (32/\alpha^2) \log n$, any starting vertex $u \in V$ and a
fixed vertex $v \in V$, we have that
\begin{equation*}
  \degree(v)/ (2m) - 1/n^4 \le M^{\ell}(v,u) \le \degree(v)/(2m) + 1/n^4.
\end{equation*}
Recall that the number~$K$ of random walks and rejection threshold
$\thresholdCount_v$ for vertex~$v$ are set as $K = 2m^2$ and 
$\thresholdCount_v = m \cdot \degree(v) \cdot (1+ 2n^{-1/4})$.
Let~$X_{u,v}$ denote the number of random walks starting from $u$ that ended
in~$v$.  It follows that 
\begin{equation*}
  \mathbb E X_{u,v} = K  \cdot M^{\ell}(v,u) \le 2m^2 \cdot  \frac{\degree(v)}
  {2m} + \frac{2m^2}{n^4} \le m \cdot \degree(v) + 1.
\end{equation*}
The random variable~$X_{u,v}$ is the sum of~$K$ independent Bernoulli trials
with success probability~$M^{\ell}(v,u)$.  Applying multiplicative Chernoff
bounds, we get the following for large enough $n$.
\begin{align*}
  	\Pr[X_{u,v} > (1+n^{-1/4})\cdot \mathbb{E} [X_{u,v}] ]
  	&< \exp(-n^{-1/2}\cdot (m\cdot \degree(v) + 1)/3)
     \le \exp(\frac{-n^{1/2}}3).
\end{align*}
The second inequality above follows from the fact that $m \cdot \degree(v) > n-1$ for a connected graph.
 Thus, each vertex~$y$ receives at most
\[ (1+n^{-1/4})\cdot \mathbb{E} [X_{u,v}] \le (1+n^{-1/4})\cdot (m \cdot \degree(v)  +
  1) < m\cdot \degree(v) + 2m/n^{1/4} \cdot \degree(v)
\]
walks from~$u$, with probability at least~$1- n^{-2}$.
Taking union bound over all $y \in V$ and all starting vertices $u$, we get
that, with high probability, our algorithm outputs \emph{Accept} on every vertex of~$G$
for every starting point if~$G$ is an~$\alpha$-conductor. 
Finally, taking the
union bound over the events that we rejected due to congestion or due to
receiving too many walks at some vertex, the claim follows.

Next, we analyse the probability of rejecting if $G$ is far from having the desired conductance.
By Lemma~\ref{lem:highVolSet}, there
exists a set~$S \subset V$, with~$\volume(S) \ge \epsilon m /10$, such that the
conductance of $S$ is at most~$\alpha^2/960$. Further applying
Corollary~\ref{cor:highcountTrapT}
with~$\eta = 5/486$ on $S$ as above, we get that
there exists a set $P$ of sticky vertices such that $ \volume(P) \ge (5\volume(S))/486$ and for every $v \in P$,
there exists a set $T \subseteq S$, $\volume(T) = (481 \volume(S)) /486$, such that
 \begin{align*}
     \trap(v, T, \ell) &\ge \frac{\volume(T)}{2m} + \left(\frac{160}{243} - \frac{\volume(T)}{2m}\right)
   \left(1- \frac{\alpha^2}{320} \right)^{\ell} \\
   &\ge \frac{\volume(T)}{2m} +  \left(\frac{160}{243} - \frac{\volume(T)}{2m}\right)
   \cdot{e}^{-\alpha^2\ell/160}
   \end{align*}
 where the last inequality follows from that $1-x>e^{-2x}$, for $0\le x < 1/2$,
 provided that~$\alpha^2/320 \le 1/2$.  
 For $\ell = (32/\alpha^2) \log n$ and
 $\volume(T) \le \volume(S) \le \volume(V)/2 = m$, we get that, for every $v \in P$,
  $\trap (v,T, \ell) \ge \frac{\volume(T)}{2m} +  \frac{77}{486} \cdot \frac{1}{n^{1/5} }$.

Let us assume that a vertex $u \in P \subset A$ is picked as the starting
vertex of $K = 2m^2$ random walks in~$G$.  By
Corollary~\ref{cor:highcountTrapT}, a set $T$ with
$\volume(T) = (481\volume(S)) /486$ with
$\trap(v, T, \ell) \ge \volume(T)/2m + 77/486 \cdot n^{-1/5}$ will
exist, for every $v \in P$. 
Also note that $\volume(T) \le \volume(S) \le \volume(V) =2 m$.
For some appropriate constant $c_1$, $\volume(T) = c_1 \cdot  m = \Theta(m)$.
Further, let~$Y_{u, T}$ be the number of walks
that ended in the set $T$ (corresponding to $u$ as in
Corollary~\ref{cor:highcountTrapT}) after $\ell$ steps.  It follows that
\begin{equation*}
  \mathbb E Y_{u,T} \ge K\cdot \left( \frac{\volume(T)}{2m} + \frac{77}{486}n^{-1/5} \right) 
  \ge m \cdot \volume(T) +
  \frac{154}{486} \cdot \frac{m^2}{n^{1/5}} . 
\end{equation*}
Let $c_2 =  \frac{77}{486}$.
By an application of Chernoff
bound,we get
\begin{align*}
\lefteqn{\Pr\left[ Y_{u,T} < \left(1 - 3(m\cdot \volume(T))^{-1/2} \right) \mathbb{E}[Y_{u,T}] \right]}\\
&\quad < \exp\left( - 4(m\cdot \volume(T))^{-1} \cdot \mathbb{E}[Y_{u,T}]\right)\\
&\quad \le \exp\left( -4 (m\cdot \volume(T))^{-1} \cdot \left(\volume(T) \cdot m + 2 c_2 m^2 n^{-1/5} \right)
                                                    \right)\\
& \quad \le \exp\left(-4\cdot (1+ \Theta(n^{-1/5}) \right) < 1/10.
\end{align*}
With probability at least $9/10$,the total number of walks received by set $T$ is
\begin{align*}
	\begin{aligned}
 	&\ge (1 -  3(m \cdot \volume(T))^{-1/2}) \left(m \cdot \volume(T) + 2 c_2 {m^2/n^{1/5}}\right)\\
	&\ge m \cdot \volume(T) + 2 c_2 {m^2/n^{1/5}} - 3\sqrt m \sqrt
   {\volume(T)} - 6 c_2 m^{3/2} (\volume(T))^{-1/2} n^{-1/5} .
 \end{aligned}
\end{align*}
 The number of walks received by any vertex $v \in T$ is minimum when
the walks within $T$ have mixed well 
reaching their stationary distribution with respect to 
$T$. 
It follows that the number of walks received by a vertex $v \in T$
is at least
$$
       \degree(v)\cdot \left( m   + 2 c_2 \frac{m^2}{n^{1/5} \cdot \volume(T)} 
                                 - 3 \frac{\sqrt m \cdot \sqrt{\volume(T)}} {\volume(T)} 
                                  - 6 c_2 \frac{m^{3/2}} {(\volume(T))^{3/2} n^{1/5} }
                                    \right).
$$
Recalling that $\volume(T) = c_1 m$, the expected number of walks
received by a vertex $v \in T$ is at least
\begin{equation*}
  \degree(v) \cdot \left( m   + 2 \frac{c_2}{c_1} \frac{m}{n^{1/5}} -
    \frac{3}{\sqrt{c_1}} -6\frac{c_2}{(c_1)^{3/2} n^{-1/5}} \right) =
  \degree(v)\cdot \left( m   + 2 \frac{c_2}{c_1} \frac{m}{n^{1/5}} - O(1)  \right). 
\end{equation*}
Therefore, on average, vertex $v \in T$ of degree $\degree(v)$ receives more than
the threshold $\thresholdCount_v = m \cdot \degree(v) + 2m \cdot \degree(v)/n^{1/4}$ number
of walks for large enough $n$. Thus, some vertex in~$T$ will receive more
than~$\thresholdCount_v$ walks and 
output \emph{Reject}.

\newcommand{\notSampled}{\mathcal{E}} Let $\notSampled$ be the event that none
of the vertices in $P$ is sampled to be one of the starting points in $Q$.
Since each vertex $u \in V$ is sampled with probability $5000 \cdot\degree(v)/(2\epsilon \cdot m)$
and~$\volume(P) \ge (5\epsilon \cdot m)/4860$, it follows that
\begin{equation*}
    Pr[\notSampled] \le \left(1 - 5000 \cdot\degree(v)/(2\epsilon \cdot m) \right)^{\frac{5\epsilon \cdot m}{4860}}
    \le e^{-2.5}  = 0.08.
\end{equation*}
Taking a union bound over the probability of the event $\notSampled$ and the
probability of set $T$ around a starting vertex $u \in P$ not receiving enough
walks, we get that with probability at most~$0.1 + 0.08 =0.18$, no vertex will output \emph{Reject}.  Thus, our distributed algorithm will output \emph{Reject} with probability at least $2/3$, on at least one vertex of $G$.
 Finally, the upper bound on the number of communication rounds follows from the
length~$\ell = \frac{32}{\alpha^2} \log n$ of each random walk.
\end{proof}

 \section{Conclusion and Future Directions}
 \label{sec:conclusion}
 This paper proposes a very simple distributed CONGEST model algorithm to test the conductance of a network.  At the end of the algorithm some node outputs \emph{Reject} if  the network is a bad conductor, or every node outputs \emph{Accept} otherwise. Unlike previous work, this shows we can achieve such testing without aggregation and spanning structures. This raises the possibility of conducting the testing even in the presence of node and link failures, in particular, if we can prove robustness of random walks (possibly in special cases). Since the algorithm takes $O(\log n)$ rounds, $O(m \log n)$ is an obvious bound on the number of messages - however, could a more careful analysis of the message complexity of this algorithm be useful? Property testing has the limit that there's a regime of uncertainity where the tester has no guarantees - a stronger algorithm would be a verifier which could verify the property over the whole range of input. Could verifiers be developed with similar complexities?



%
%
\bibliographystyle{alphaurl}
\bibliography{biblioExpansion}
\end{document}